\pgfplotsset{compat=1.14}
\def\BState{\State\hskip-\ALG@thistlm}
\newcommand\reallywidehat[1]{%
\savestack{\tmpbox}{\stretchto{%
  \scaleto{%
    \scalerel*[\widthof{\ensuremath{#1}}]{\kern-.6pt\bigwedge\kern-.6pt}%
    {\rule[-\textheight/2]{1ex}{\textheight}}
  }{\textheight}%
}{0.5ex}}%
\stackon[1pt]{#1}{\tmpbox}%
}
\newtheorem*{theorem*}{Theorem}
\newtheorem{theorem}{Theorem}[section]
\newtheorem{lemma}[theorem]{Lemma}
\newtheorem{example}[theorem]{Example}
\newtheorem{proposition}[theorem]{Proposition}
\newtheorem{definition}[theorem]{Definition}
\newcommand*\Defs{
    set samples 1001;
    min(p,q)=(p<q)?p:q;
    max(p,q)=(p>q)?p:q;
    const(x,l,r,w)=x<=l||x>=r?0:w;
    Binv(p,q)=exp(lgamma(p+q)-lgamma(p)-lgamma(q));
    beta(x,p,q)=p<=0||q<=0?1/0:x<0||x>1?0.0:Binv(p,q)*x**(p-1.0)*(1.0-x)**(q-1.0);
}
\begin{document}

\title{Equal Affection or Random Selection: the Quality of Subjective Feedback from a Group Perspective}

\author[ ]{Jiale Chen \qquad Yuqing Kong \qquad Yuxuan Lu}

\affil[ ]{The Center on Frontiers of Computing Studies, Computer Science Dept., Peking University}
\affil[ ]{\texttt{\{jiale\_chen, yuqing.kong, yx\_lu\}@pku.edu.cn}}
\date{}
\maketitle

\begin{abstract}

In the setting where a group of agents is asked a single subjective multi-choice question (e.g. which one do you prefer? cat or dog?), we are interested in evaluating the quality of the collected feedback. However, the collected statistics are not sufficient to reflect how informative the feedback is since fully informative feedback (equal affection of the choices) and fully uninformative feedback (random selection) have the same uniform statistics. 

Here we distinguish the above two scenarios by additionally asking for respondents' predictions about others' choices. We assume that informative respondents' predictions strongly depend on their own choices while uninformative respondents' do not. With this assumption, we propose a new definition for uninformative feedback and correspondingly design a family of evaluation metrics, called $f$-variety, for group-level feedback which can 1) distinguish informative feedback and uninformative feedback (separation) even if their statistics are both uniform and 2) decrease as the ratio of uninformative respondents increases (monotonicity). We validate our approach both theoretically and numerically. Moreover, we conduct two real-world case studies about 1) comparisons about athletes and 2) comparisons about stand-up comedians to show the superiority of our approach. 
\end{abstract}

\section{introduction}




Many areas need subjective data collected by survey methods. For example, the voting pollsters need to elicit subjective opinions from potential voters~\cite{rothschild2011forecasting,galesic2018asking}. Product companies want to elicit purchase intentions from potential customers~\cite{radas2019whose}. Meanwhile, it is always a concern that the feedback quality may not be guaranteed due to the lack of expertise or effort of the respondents~\cite{radas2019whose}. In such case, researchers develop multiple approaches (e.g. attention test, different non-flat payment schemes)~\cite{prelec2004bayesian,weaver2013creating,witkowski2012robust,radanovic2014incentives, DBLP:conf/innovations/KongS18} to encourage high-quality subjective feedback. However, despite the existence of these elicitation approaches, there does not exist a systematic way to evaluate the quality of the subjective feedback such that we can compare those elicitation approaches in practice. 

A key challenge here is that we cannot verify each individual's answer since it is subjective. From a group perspective, the collected statistics are not sufficient to reflect how informative they are. For example, given a multi-choice question (e.g. Which one do you prefer? Panda Express or Chick-fil-A), an unbalanced statistics (e.g. 80\% Chick-fil-A, 20\% Panda Express) is informative since uninformative respondents' statistics should be uniform (50\%,50\% for binary-choice). However, the opposite is not true. That is, uniform statistics may not be uninformative. Let's consider the following example. 

\begin{itemize}
\item Which one do you prefer? dog or cat?
\item Which one do you prefer? realism or liberalism?
\end{itemize}

When we ask people the above two questions, we may receive uniform statistics (50\%,50\%) for both of them. However, it's possible that in the first question, the respondents fully understand the question and half of them prefer cats while in the second question, the respondents do not understand the question and randomly select one choice. To distinguish ``equal affection'' and ``random selection'', we need 1) to collect additional information besides their choices; 2) a more refined concept of \emph{uninformativeness}. 

To address the above problem, we additionally probe the respondents' predictions for other people by asking them "What percentage of people prefer dogs?". With this additional query, we can compare the dog lovers' predictions and cat lovers' predictions (also for realism/liberalism ``lovers''). Since people are usually attracted to belief systems that are consistent with their preferences, we can boldly assume that dog lovers have very different predictions for other people's preferences from cat lovers. However, if respondents do not understand the meaning of realism nor liberalism, they will not form strong opinions for other people's preferences. 

Inspired by this, we utilize the additional statistics about people's predictions and propose a more refined concept of uninformativeness by adding a condition that describes the relationship between respondent's choice and prediction. In our new definition, a respondent's feedback is uninformative if and only if 
\begin{itemize}
\item \emph{Uniform choice}: she picks the choice uniformly at random;
\item \emph{Independence}: her choice and prediction are independent. 
\end{itemize}

With the above definition, the ``random selection'' is still uninformative while the ``equal affection'' is not. Moreover, we show that this new definition satisfies two natural properties:
\begin{itemize}
\item \emph{Stability}: a mixed group of uninformative feedback is still uninformative;
\item \emph{Additive property}: a mixed group of uninformative feedback and informative feedback is informative. 
\end{itemize}

We also provide a corresponding family of \emph{non-negative} evaluation metric, \emph{$f$-variety}, such that with our new definition, $f$-variety
\begin{itemize}
\item \emph{Separation:} separates informative and uninformative feedback by assigning approximately zero value to only uninformative feedback;
\item \emph{Monotonicity:} decreases as the ratio of uninformative feedback increases. 
\end{itemize}

$f$-variety is defined as $f$-divergence between the joint distribution over choice-prediction pairs and the corresponding uninformative ones, which has uniform choices and the same marginal distribution over predictions. Intuitively, $f$-variety represents the amount of information contained in group-level feedback. To give a taste of $f$-variety, we take a special $f$-divergence, total variation distance (Tvd), as an example and visualize the corresponding Tvd-variety in the following figure.

\begin{figure}[!h]
\begin{center}
\subfigure
{
\begin{tikzpicture}
\sffamily
\begin{axis}[
legend style={font=\small,
 nodes={scale=1, transform shape},
 at={(.03,.93)},
 anchor=north west,
 draw=none},
legend cell align={left},
width = 2.7in, height = 1.8in,
ylabel near ticks,
ytick={0, .2, .4, .6, .8, 1., 1.2, 1.4, 1.6, 1.8, 2},
ylabel = {\small Probability density},
xlabel near ticks,
every tick label/.append style={font=\scriptsize},
xmin=-.05,xmax=1.05,ymin=0,ymax=1.8,
xtick={0, .1, .2, .3, .4, .5, .6, .7, .8, .9, 1},
xlabel={\small Prediction},
xlabel style = {yshift=0.05in},
yticklabel style={
  /pgf/number format/fixed,
  /pgf/number format/precision=5
},
no markers,
legend image code/.code={
        \draw [#1] (0cm,-0.1cm) rectangle (0.2cm,0.1cm); },
]
    \addplot gnuplot [raw gnuplot,fill,opacity=.7,color=red,solid]
    {
        \Defs
        plot [x=0:1] beta(x,8,3)*.5;
    };
    \addlegendentry{choice $+$}
    \addplot gnuplot [raw gnuplot,fill,opacity=.7,color=cyan,solid]
    {
        \Defs
        plot [x=0:1] beta(x,4,5)*.5;
    };
    \addlegendentry{choice $-$}
\end{axis}
\end{tikzpicture}
}
\subfigure
{
\begin{tikzpicture}
\sffamily
\begin{axis}[
legend style={font=\small,
 nodes={scale=1, transform shape},
 at={(.03,.93)},
 anchor=north west,
 draw=none},
legend cell align={left},
width = 2.7in, height = 1.8in,
ylabel near ticks,
ytick={0, .2, .4, .6, .8, 1., 1.2, 1.4, 1.6, 1.8, 2},
ylabel = {\small Probability density},
xlabel near ticks,
every tick label/.append style={font=\scriptsize},
xmin=-.05,xmax=1.05,ymin=0,ymax=1.8,
xtick={0, .1, .2, .3, .4, .5, .6, .7, .8, .9, 1},
xlabel={\small Prediction},
xlabel style = {yshift=0.05in},
yticklabel style={
  /pgf/number format/fixed,
  /pgf/number format/precision=5
},
no markers,
legend image code/.code={
        \draw [#1] (0cm,-0.1cm) rectangle (0.2cm,0.1cm); },
]
    \addplot gnuplot [raw gnuplot,color=gray,solid,fill]
    {
        \Defs
        plot [x=0:1] max(beta(x,8,3)*.5,beta(x,4,5)*.5);
    };
    \addlegendentry{Tvd variety}
    
    \addplot gnuplot [raw gnuplot,color=white,solid,fill]
    {
        \Defs
        plot [x=0:1] min(beta(x,8,3)*.5,beta(x,4,5)*.5);
    };
\end{axis}
\end{tikzpicture}
}
\end{center}

\caption{\small \textbf{Tvd-variety in binary case:} In the binary case where choices are $\{+,-\}$, we draw the joint distribution over the choice-prediction pairs. Specifically, the blue region is the distribution over ``$-$'' people's predictions about what percentage of people will choose ``$+$'', \emph{multiplying the ratio of ``$-$'' people}. The area of the blue region is the ratio of ``$-$'' people. We plot the red region analogously. The area of the shading region, which is the difference between the blue region and red region, is proportional to the Tvd-variety. Tvd-variety is always non-negative. The above figure shows an example of ``equal affection''. In the ``random selection'' case, since choices are uniform and predictions are independent of choices, the blue region will be the same as the red region. This leads to a zero Tvd-variety.} \label{fig:tvd}
\end{figure}


In addition to theoretical validation (Section~\ref{sec:theory}), we perform multiple numerical experiments (Section~\ref{sec:num}) to validate the robustness of $f$-variety when we only have access to samples rather than the joint distribution over choice-prediction pairs. We also perform two real-world case studies about comparisons for athletes and comparisons for stand-up comedians (Section~\ref{sec:cases}). For evaluation, we also collect side information of our respondents as reference (e.g. we ask for their knowledge about these two contents in advance). We compare $f$-variety with a baseline, which only measures the uniformity of the aggregated statistics of the choices. The results show that compared to baseline, $f$-variety is more consistent with the reference, which shows the superiority of $f$-variety. In the situation where we cannot obtain high-quality side information (e.g. polls, survey purchase intention, comparing different payment schemes for subjective surveys), we can use $f$-variety as an evaluation metric.

\subsection{Related work}

We use the choice-prediction framework for data collection, {\sl i.e.}, we ask for respondents' choices and their predictions of others' choices. The choice-prediction framework has applications in different fields. Firstly, it provides more accurate data. Psychological research suggests that peer-predictions (predictions about others' behavior) are a more accurate predictor of individuals' future behavior than self-predictions (predictions about oneself)~\cite{helzer2012and}. Researches about political voting also have shown that predictions of others' choices can achieve higher accuracy of election predictions~\cite{rothschild2011forecasting,galesic2018asking}. We consider a different problem and focus on evaluating the collected data without any ground truth. 

Secondly, the choice-prediction framework is used to elicit truthful opinions of people in information elicitation. Bayesian Truth Serum (BTS) combines respondents' choices and predictions and then creates incentives for truthfulness in elicitation for subjective questions~\cite{prelec2004bayesian, weaver2013creating}. And it's directly used in solving the crowd wisdom questions~\cite{prelec2017solution}. There are also further works that adopt the same framework while avoiding BTS's assumption of infinite participants~\cite{witkowski2012robust,radanovic2014incentives,DBLP:conf/innovations/KongS18}. These works all focus on designing truthful incentives for individuals and assume that people who have the same choices also have the same predictions, {\sl i.e.}, the common prior assumption. In contrast to the above works, we focus on designing an evaluation metric after collecting feedback from a group of people and do not need the strong common prior assumption.

Thirdly, the choice-prediction framework can also be used in measuring the expertise of individual~\cite{radas2019whose}, which is closely related to our work. \citet{radas2019whose} assume that people with more accurate predictions are more informative and use this assumption to measure the expertise of each individual. This assumption may not be valid when experts are not familiar with other people who also answer the question. Our work measures the expertise of a \emph{group} of people and we do not need such an assumption. 


Our work uses $f$-divergence as an important ingredient to designing the new metric, $f$-variety. Previous works are using $f$-divergence in measuring the amount of information in individuals' answers to subjective questions~\cite{kong2018water,kong2019information}. These works aim to elicit individuals' truthful opinions, while our work uses $f$-divergence to measure group-level informativeness and leads to a totally different metric.

\section{Theory}\label{sec:theory}

In this section, we will formally introduce our model and state our definition for uninformative distribution. Given the definition, we will propose a family of metrics, $f$-variety, to measure the amount of information contained in distributions. We will provide a theoretical validation for both our definition and our metrics.  

A group of agents is asked to answer a multi-choice question (e.g. which one do you prefer? realism or liberalism?) and also predict other people's choices (e.g. what percentage of people prefer realism?). Given the question, we assume that each agent receives a pair of choice and prediction $(c,p)$ from distribution $D$ independently. We do not assume that all agents are homogeneous. That is, Alice's choice-prediction pair can have a different distribution from Bob's. For a group of agents, we care about the \emph{average} distribution over their choice-prediction pairs. 



For non-experts who have no clue about the question's meaning, they will pick the choice uniformly at random. We can define a distribution with uniform choices as an uninformative distribution. However, like our motivating example, experts' feedback can also be uniform (e.g. 50\% experts prefer realism). In this case, we refine the previous definition by additionally requiring that a non-expert's choice is independent of her prediction. Formally, we require that every non-expert's choice-prediction pair is drawn from an \emph{uninformative} distribution which is defined as follows:

\begin{definition}[Uninformative Dist $U\otimes P$]
A distribution $D$ over choice and prediction $C,P$ is uninformative if and only if:
\begin{itemize}
\item Uniform choice: the marginal distribution of choice is uniform, {\sl i.e.} $\Pr_{D}[C=c]=\frac{1}{N_C}$ where $N_C$ is the number of choices;
\item Independence: Choice and prediction are independent, {\sl i.e.} $\Pr_{D}[C=c,P=p]=\Pr_{D}[C=c]\Pr_D[P=p]$
\end{itemize}
\end{definition}

Given random variables $X$, $Y$, we use $X\otimes Y$ to represent the independent joint distribution which is the product of $X$ and $Y$'s distributions. We use $U$ to denote a random choice whose distribution is uniform. An uninformative distribution $D$ can be represented by $U\otimes P$ where $P$'s distribution is $D$'s marginal distribution over the predictions.

\begin{figure}[!h]
\centering
\subfigure[\small {\bf Dist 1 (informative)}: non-uniform choices, independent choice and prediction pairs]
{
\begin{tikzpicture}
\sffamily
\begin{axis}[
legend style={font=\small,
 nodes={scale=1, transform shape},
 at={(.03,.93)},
 anchor=north west,
 draw=none},
legend cell align={left},
width = 2.7in, height = 1.8in,
ylabel near ticks,
ytick={0, .2, .4, .6, .8, 1., 1.2, 1.4, 1.6, 1.8},
xlabel near ticks,
every tick label/.append style={font=\scriptsize},
xmin=-.05,xmax=1.05,ymin=0,ymax=1.8,
xtick={0, .1, .2, .3, .4, .5, .6, .7, .8, .9, 1},
xlabel style = {yshift=0.05in},
yticklabel style={
  /pgf/number format/fixed,
  /pgf/number format/precision=5
},
no markers,
legend image code/.code={
        \draw [#1] (0cm,-0.1cm) rectangle (0.2cm,0.1cm); },
]
    \addplot gnuplot [raw gnuplot,fill,opacity=.7,color=red,solid]
    {
        \Defs
        plot [x=0:1] beta(x,3,2)*.7;
    };
    \addlegendentry{choice $+$}
    \addplot gnuplot [raw gnuplot,fill,opacity=.7,color=cyan,solid]
    {
        \Defs
        plot [x=0:1] beta(x,3,2)*.3;
    };
    \addlegendentry{choice $-$}
\end{axis}
\end{tikzpicture}
}
\hspace*{0.5cm}
\subfigure[\small {\bf Dist 2 (informative)}: uniform choices, dependent choice and prediction pairs]
{
\begin{tikzpicture}
\sffamily
\begin{axis}[
legend style={font=\small,
 nodes={scale=1, transform shape},
 at={(.03,.93)},
 anchor=north west,
 draw=none},
legend cell align={left},
width = 2.7in, height = 1.8in,
ylabel near ticks,
ytick={0, .2, .4, .6, .8, 1., 1.2, 1.4, 1.6, 1.8},
xlabel near ticks,
every tick label/.append style={font=\scriptsize},
xmin=-.05,xmax=1.05,ymin=0,ymax=1.8,
xtick={0, .1, .2, .3, .4, .5, .6, .7, .8, .9, 1},
xlabel style = {yshift=0.05in},
yticklabel style={
  /pgf/number format/fixed,
  /pgf/number format/precision=5
},
no markers,
legend image code/.code={
        \draw [#1] (0cm,-0.1cm) rectangle (0.2cm,0.1cm); },
]
    \addplot gnuplot [raw gnuplot,fill,opacity=.7,color=red,solid]
    {
        \Defs
        plot [x=0:1] beta(x,8,3)*.5;
    };
    \addlegendentry{choice $+$}
    \addplot gnuplot [raw gnuplot,fill,opacity=.7,color=cyan,solid]
    {
        \Defs
        plot [x=0:1] beta(x,4,5)*.5;
    };
    \addlegendentry{choice $-$}
\end{axis}
\end{tikzpicture}
}
\vfill
\subfigure[\small {\bf Dist 3 (uninformative)}: uniform choices, independent choice and prediction pairs]
{
\begin{tikzpicture}
\sffamily
\begin{axis}[
legend style={font=\small,
 nodes={scale=1, transform shape},
 at={(.03,.93)},
 anchor=north west,
 draw=none},
legend cell align={left},
width = 2.7in, height = 1.8in,
ylabel near ticks,
ytick={0, .2, .4, .6, .8, 1., 1.2, 1.4, 1.6, 1.8},
xlabel near ticks,
every tick label/.append style={font=\scriptsize},
xmin=-.05,xmax=1.05,ymin=0,ymax=1.8,
xtick={0, .1, .2, .3, .4, .5, .6, .7, .8, .9, 1},
xlabel style = {yshift=0.05in},
yticklabel style={
  /pgf/number format/fixed,
  /pgf/number format/precision=5
},
no markers,
legend image code/.code={
        \draw [#1] (0cm,-0.1cm) rectangle (0.2cm,0.1cm); },
]
    \addplot gnuplot [raw gnuplot,fill,opacity=.7,color=red,solid]
    {
        \Defs
        plot [x=0:1] beta(x,3,4)*.5;
    };
    \addlegendentry{choice $+$}
    \addplot gnuplot [raw gnuplot,fill,opacity=.7,color=cyan,solid]
    {
        \Defs
        plot [x=0:1] beta(x,3,4)*.5;
    };
    \addlegendentry{choice $-$}
\end{axis}
\end{tikzpicture}
}
\hspace*{0.5cm}
\subfigure[\small {\bf Dist 4 (uninformative)}: uniform choices, independent choice and prediction pairs]
{
\begin{tikzpicture}
\sffamily
\begin{axis}[
legend style={font=\small,
 nodes={scale=1, transform shape},
 at={(.03,.93)},
 anchor=north west,
 draw=none},
legend cell align={left},
width = 2.7in, height = 1.8in,
ylabel near ticks,
ytick={0, .2, .4, .6, .8, 1., 1.2, 1.4, 1.6, 1.8},
xlabel near ticks,
every tick label/.append style={font=\scriptsize},
xmin=-.05,xmax=1.05,ymin=0,ymax=1.8,
xtick={0, .1, .2, .3, .4, .5, .6, .7, .8, .9, 1},
xlabel style = {yshift=0.05in},
yticklabel style={
  /pgf/number format/fixed,
  /pgf/number format/precision=5
},
no markers,
legend image code/.code={
        \draw [#1] (0cm,-0.1cm) rectangle (0.2cm,0.1cm); },
]
    \addplot gnuplot [raw gnuplot,fill,opacity=.7,color=red,solid]
    {
        \Defs
        plot [x=0:1] beta(x,2,2)*.5;
    };
    \addlegendentry{choice $+$}
    \addplot gnuplot [raw gnuplot,fill,opacity=.7,color=cyan,solid]
    {
        \Defs
        plot [x=0:1] beta(x,2,2)*.5;
    };
    \addlegendentry{choice $-$}
\end{axis}
\end{tikzpicture}
}

\captionof{figure}{\small \textbf{Examples of (un)informative distributions in binary case:} In distribution 1, choices and predictions are independent and the marginal distribution over choices is non-uniform, thus the blue region and red region are proportional. In distribution 2, predictions depend on choices and choices are uniform. In distributions 3 and 4, predictions and choices are independent and choices are uniform, thus the blue region and the red region are the same. In our definition, distributions 1 and 2 are informative while distributions 3 and 4 are uninformative.}
\label{fig:groups}
\end{figure}
We use four different distributions in Figure~\ref{fig:groups} to explain our definition of uninformative distribution. The definition $U\otimes P$ not only is consistent with our intuition but also has multiple desired natural properties. The first property, \emph{stability}, is that a mixed group of non-experts is still uninformative. In the previous example, mixing a group of respondents, whose average distribution is dist 3, and another group, whose average distribution is dist 4, will not make them informative. The second property, \emph{additive property}, means that adding experts into the group of non-experts will make the whole group informative. The initial definition, which defines a distribution with uniform choices as an uninformative distribution, satisfies both stability and additive property naturally. We show that our refined definition still satisfies the two properties and allows a more refined concept of non-experts. 

\begin{proposition}[Properties of $U\otimes P$]
The average distribution over 
\begin{description}
\item [Stability: 0+0=0] a mixed group of non-experts' choice-prediction pairs is uninformative;
\item [Additive Property: 0+!0=!0] a mixed group of experts and non-experts' choice-prediction pairs is informative.
\end{description}
\end{proposition}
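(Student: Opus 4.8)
The plan is to reduce both properties to a single \emph{linear} characterization of uninformativeness and then exploit the fact that probabilities combine linearly under mixing. First I would observe that the two defining conditions collapse into one: a distribution $D$ is uninformative if and only if
\[
g_D(c,p) := \Pr_D[C=c,P=p] - \tfrac{1}{N_C}\Pr_D[P=p] = 0 \quad\text{for all } c,p.
\]
Indeed, if $g_D\equiv 0$ then summing over $p$ gives $\Pr_D[C=c]=\tfrac{1}{N_C}$ (uniform choice), and substituting this back recovers independence; conversely, uniform choice together with independence yields $g_D\equiv 0$. This reformulation is the crux, because $g_D$ depends \emph{linearly} on $D$: for any mixture $D=\sum_i w_i D_i$ with $w_i\ge 0$ and $\sum_i w_i=1$, both $\Pr_D[C=c,P=p]$ and $\Pr_D[P=p]$ are the corresponding convex combinations, so $g_D=\sum_i w_i\,g_{D_i}$.

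For stability, I would take each group $D_i$ in the mixture to be uninformative, so $g_{D_i}\equiv 0$ for every $i$. Linearity then gives $g_D=\sum_i w_i\,g_{D_i}\equiv 0$, hence the mixed (average) distribution $D$ is again uninformative. Equivalently, one checks directly that a convex combination of uniform-choice marginals is still uniform, and that $\Pr_D[C=c,P=p]=\tfrac{1}{N_C}\sum_i w_i\Pr_{D_i}[P=p]=\tfrac{1}{N_C}\Pr_D[P=p]$, recovering independence.

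For the additive property, I would write the whole group as $D = w_{\mathrm{non}}\,D_{\mathrm{non}} + w_{\mathrm{exp}}\,D_{\mathrm{exp}}$, where $D_{\mathrm{non}}$ is the average over non-experts (uninformative by the stability part just proved) and $D_{\mathrm{exp}}$ is the average over experts with $w_{\mathrm{exp}}>0$. Since the experts are informative, $g_{D_{\mathrm{exp}}}\not\equiv 0$, while $g_{D_{\mathrm{non}}}\equiv 0$, so linearity gives $g_D = w_{\mathrm{exp}}\,g_{D_{\mathrm{exp}}}\not\equiv 0$, whence $D$ is informative. The step I expect to be the main obstacle is precisely ensuring that the experts' informativeness cannot be cancelled by the non-experts. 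The reformulation handles this automatically: uninformative distributions contribute \emph{exactly zero} to $g$, so a convex combination placing positive weight on a nonzero term remains nonzero. Without the reformulation one would have to argue case by case, and the delicate scenario is when the experts have uniform choice but choice-dependent predictions, where one must verify that this dependence survives averaging against the independent non-expert mass rather than being diluted to independence.
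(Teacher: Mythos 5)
Your proof is correct, and it takes a cleaner route than the paper's. The paper proves stability by directly checking that $\alpha\, U\otimes P_1+(1-\alpha)\,U\otimes P_2=U\otimes(\alpha P_1+(1-\alpha)P_2)$, and proves the additive property by a case split (experts' choice marginal non-uniform versus uniform) followed by a proof by contradiction: assuming the mixture equals $U\otimes P_{\text{mix}}$ and subtracting the non-expert mass to force the experts' joint distribution to factor as $U\otimes P$. Your reformulation --- $D$ is uninformative iff $g_D(c,p):=\Pr_D[C=c,P=p]-\tfrac{1}{N_C}\Pr_D[P=p]$ vanishes identically --- packages both defining conditions into the kernel of a single linear map, after which stability is closure of a kernel under convex combination and the additive property is the observation that $g_D=w_{\mathrm{exp}}\,g_{D_{\mathrm{exp}}}\neq 0$ whenever $w_{\mathrm{exp}}>0$. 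This eliminates the paper's case analysis and contradiction argument entirely, handles arbitrary finite mixtures rather than just two groups, and makes transparent exactly the point you flag as delicate (that expert dependence cannot be diluted away by independent non-expert mass). The only thing the paper's version buys in exchange is that it stays phrased directly in terms of the two named conditions (uniform choice, independence), which is slightly more self-contained for a reader who has just seen the definition; your equivalence $g_D\equiv 0 \Leftrightarrow$ uninformative is correctly justified in both directions, so nothing is lost.
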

\begin{proof}
Given two group of non-experts whose average distributions are $U\otimes P_1$ and $U\otimes P_2$ correspondingly, the mixed average distribution will be $\alpha U\otimes P_1 + (1-\alpha) U\otimes P_2=U\otimes (\alpha P_1 + (1-\alpha) P_2)$ since $\alpha \Pr[U=u]\Pr[P_1=p]+(1-\alpha)\Pr[U=u]\Pr[P_2=p]=\Pr[U=u]\left(\alpha \Pr[P_1=p]+(1-\alpha)\Pr[P_2=p]\right)$. Thus, the mixed average distribution is still uninformative. 
Given a group of non-experts $U\otimes P_0$ and a group of experts $CP$, if the average distribution over the experts has non-uniform marginal distribution over the choices, then the mixed version must have non-uniform marginal distribution over choices as well thus be informative. Therefore, we only need to consider the situation where the average distribution over the experts is $UP$. In this case, we will prove the result by contradiction. Let's assume that the mixed version has uninformative average distribution. Then there exists a random variable $P_{\text{mix}}$ and $\alpha>0$ such that $\alpha \Pr[U=u]\Pr[P_0=p]+(1-\alpha)\Pr[U=u,P=p]=\Pr[U=u]\Pr[P_{\text{mix}}=p]$ where $\Pr[P_{\text{mix}}=p]=\alpha \Pr[P_0=p]+(1-\alpha)\Pr[P=p]$. This implies that $\Pr[U=u,P=p]=\Pr[U=u]\Pr[P=p]$ which contradicts the fact that $UP$ is informative, {\sl i.e.}, not equal to $U\otimes P$. 
\end{proof}

Given the definition of uninformative distribution, it's natural to ask for a metric for the informativeness of the distribution. At a high level, this metric should be always non-negative and assign zero value to uninformative distribution and strictly positive value to informative distribution. Moreover, we want the metric to satisfy an information-monotonicity as well: mixing experts with non-experts will decrease the amount of information contained in experts.

We propose the following metric family, $f$-variety, that satisfies all desired properties. The idea is to measure the amount of information contained in a distribution $D$ by measuring its ``distance'' to a corresponding uninformative distribution. To measure the ``distance'', we use $f$-divergence $D_f:\Delta_{\Sigma}\times \Delta_{\Sigma}\rightarrow \mathbb{R}$, a non-symmetric measure of the difference between distribution $\mathbf{p}\in \Delta_{\Sigma} $ and distribution $\mathbf{q}\in \Delta_{\Sigma} $ 
and is defined to be $$D_f(\mathbf{p},\mathbf{q})=\sum_{\sigma\in \Sigma}
\mathbf{p}(\sigma)f\left( \frac{\mathbf{q}(\sigma)}{\mathbf{p}(\sigma)}\right)$$
where $f(\cdot)$ is a convex function and $f(1)=0$. Two commonly used $f$-divergences are KL divergence $D_{KL}(\mathbf{p},\mathbf{q})=\sum_{\sigma}\mathbf{p}(\sigma)\log\frac{\mathbf{p}(\sigma)}{\mathbf{q}(\sigma)}$ by choosing $-\log(x)$ as the convex function $f(x)$, and Total variation Distance $D_{tvd}(\mathbf{p},\mathbf{q})=\frac{1}{2}|\mathbf{p}-\mathbf{q}|_1=\frac{1}{2}\sum_{\sigma}|\mathbf{p}(\sigma)-\mathbf{q}(\sigma)|$, by choosing $\frac{1}{2}|x-1|$ as the convex function $f(x)$.

\begin{definition}[$f$-variety]
For any distribution $D$ over choice and prediction, we define the $f$-variety of $D$ as \[ V^f(D):=D_f(CP,U\otimes P) \] where $CP$ represents distribution $D$ and $U\otimes P$ represents the uninformative distribution which has the same marginal distribution over predictions as $D$. 
\end{definition}

\paragraph{$f$-variety vs $f$-mutual information} The definition of $f$-variety is very similar to the definition of $f$-mutual information $D_f(CP,C\otimes P)$. In the concept of mutual information, the uninformative joint distribution is the distribution over two independent random variables. Thus, mutual information measures the information of a joint distribution $CP$ by measuring the distance between $CP$ and $C\otimes P$. A natural question here is that in our setting, can we extend the definition of uninformative distribution to $C\otimes P$ and use the $f$-mutual information between the choice and prediction to measure the informativeness. The answer is no since $C\otimes P$ does not satisfy the stability property $0+0=0$. For example, both dist 1 and 4 in Figure~\ref{fig:groups} have independent choice and prediction. However, a mixed version of them does not. Without satisfies the stability property, the monotonicity will never be satisfied since adding non-experts can increase informativeness. 

We introduce a special $f$-variety, Tvd-variety. This special measure has a nice visualization in the binary case (see Figure~\ref{fig:tvd}).

\begin{example}[Tvd-variety]Given $D$, we use vector $\mathbf{q}$ to represent the marginal distribution vector over predictions. We use vector $\mathbf{q}_c$ to represent the distribution vector over predictions, conditioning on the agent receives choice $c$.

\begin{align*}
V^{tvd}(D)=&D_{tvd}(CP,U\otimes P)\\
=& \frac{1}{2} \sum_{c,p} |\Pr[C=c,P=p] - \frac{1}{N_C} \Pr[P=p]|\\
= & \frac{1}{2} \sum_c |q_c \mathbf{q}_c  - \frac{1}{N_C} \mathbf{q} |_1
\end{align*} where $N_C$ is the number of choices. 
In the binary choice case, 
\begin{align*}
V^{tvd}(D)=& \frac{1}{2} \sum_{c=+-} |q_c \mathbf{q}_c  - \frac{1}{2} \mathbf{q} |_1\\
= & \frac{1}{2} \sum_{c=+-} |q_c \mathbf{q}_c  - \frac{1}{2} (q_+ \mathbf{q}_+ + q_- \mathbf{q}_-) |_1\\
= & \frac{1}{2} |q_+ \mathbf{q}_+ - q_- \mathbf{q}_-|_1
\end{align*}
Thus, in binary case, Tvd-variety is half of the area of the symmetric difference of red and blue regions (see Figure~\ref{fig:tvd}). 
\end{example}

Here we formally state and prove the properties of the general $f$-variety. 

\begin{theorem}[Properties of $f$-variety]
$f$-variety $V^f$ satisfies:
\begin{description}
\item [Separation: $V^f$(0)=0, $V^f$(!0)>0]  for any $D$, $V^f(D)\geq 0$, for any uninformative $D_0$, $V^f(D_0)=0$;
\item [Monotonicity: $V^f$(x+0)<$V^f$(x)] for any $D$ and any uninformative $D_0$, $\forall 0<\alpha<1$, \[V^f((1-\alpha) D + \alpha D_0 )\leq (1-\alpha) V^f(D)\]
\end{description}
\end{theorem}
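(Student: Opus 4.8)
The plan is to treat the two properties separately: \textbf{separation} follows from Jensen's inequality applied to the convex function $f$, and \textbf{monotonicity} follows from the joint convexity of $f$-divergence together with an affine-decomposition observation about the uninformative version. First I would dispose of separation. Recall that $V^f(D)=D_f(CP,U\otimes P)$, with $CP$ in the first slot and the uninformative version in the second. Applying Jensen to $f$ with weights $\mathbf p(\sigma)$,
\begin{equation*}
D_f(\mathbf p,\mathbf q)=\sum_{\sigma}\mathbf p(\sigma)\,f\!\left(\frac{\mathbf q(\sigma)}{\mathbf p(\sigma)}\right)\geq f\!\left(\sum_{\sigma}\mathbf p(\sigma)\frac{\mathbf q(\sigma)}{\mathbf p(\sigma)}\right)=f(1)=0 ,
\end{equation*}
so taking $\mathbf p=CP$, $\mathbf q=U\otimes P$ gives $V^f(D)\geq 0$ for every $D$. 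For the second half, if $D_0$ is uninformative then by the definition of $U\otimes P$ we have $D_0=U\otimes P$ \emph{exactly}, i.e.\ its associated uninformative distribution is itself; hence every ratio inside $D_f(D_0,U\otimes P)$ equals $1$, each summand is $f(1)=0$, and $V^f(D_0)=0$.

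For monotonicity, the engine is the \emph{joint convexity} of $f$-divergence: the map $(\mathbf p,\mathbf q)\mapsto D_f(\mathbf p,\mathbf q)$ is jointly convex, which I would invoke from the fact that the perspective $g(p,q)=p\,f(q/p)$ of a convex $f$ is jointly convex in $(p,q)$, and summing over $\sigma$ preserves convexity. The conceptual crux — and the step I expect to be the main obstacle — is to verify that ``taking the uninformative version'' is an \emph{affine} operation in the underlying distribution. Writing $D'=(1-\alpha)D+\alpha D_0$, its prediction marginal is the mixture $(1-\alpha)\Pr_{D}[P=\cdot]+\alpha\Pr_{D_0}[P=\cdot]$; tensoring this marginal with the fixed uniform choice is linear, so the uninformative version of $D'$ decomposes as
\begin{equation*}
U\otimes P' \;=\; (1-\alpha)\,(U\otimes P)\;+\;\alpha\,(U\otimes P_0),
\end{equation*}
where $U\otimes P$ and $U\otimes P_0$ are the uninformative versions of $D$ and $D_0$.

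With both arguments of $D_f$ now written as the \emph{same} convex combination, joint convexity yields
\begin{align*}
V^f(D')
&=D_f\!\big((1-\alpha)D+\alpha D_0,\;(1-\alpha)(U\otimes P)+\alpha(U\otimes P_0)\big)\\
&\leq (1-\alpha)\,D_f(D,U\otimes P)+\alpha\,D_f(D_0,U\otimes P_0)\\
&=(1-\alpha)\,V^f(D)+\alpha\,V^f(D_0).
\end{align*}
Finally, since $D_0$ is uninformative, separation gives $V^f(D_0)=0$, whence $V^f(D')\leq(1-\alpha)V^f(D)$, the claimed inequality. The only subtlety worth double-checking is that $D'$ and $U\otimes P'$ are genuine distributions on a common support so that the perspective-function argument applies termwise; this holds because a mixture of distributions is a distribution and the support of $U\otimes P'$ contains that of $D'$.
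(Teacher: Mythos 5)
Your proposal is correct and follows essentially the same route as the paper: non-negativity of the $f$-divergence plus the fact that an uninformative $D_0$ equals its own uninformative version gives separation, and the observation that the uninformative version of a mixture is the mixture of the uninformative versions, combined with joint convexity of $D_f$ and $V^f(D_0)=0$, gives monotonicity. The only difference is that you spell out the Jensen argument for separation, which the paper dismisses as immediate.
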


\begin{proof}
The separation property follows directly from the definition of $f$-variety and uninformative distribution. 

To prove monotonicity, we need to use the joint convexity of $f$-divergence. 
\begin{lemma}[Joint Convexity~\cite{csiszar2004information}]\label{fact:jointconvexity}
For any $0\leq\lambda\leq 1$, for any $\mathbf{p_1},\mathbf{p_2},\mathbf{q}_1,\mathbf{q}_2\in \Delta_{\Sigma}$, $$D_f(\lambda \mathbf{p_1}+(1-\lambda)\mathbf{p_2},\lambda \mathbf{q_1}+(1-\lambda)\mathbf{q_2})\leq \lambda D_f(\mathbf{p_1},\mathbf{q_1})+(1-\lambda)D_f(\mathbf{p_2},\mathbf{q_2}).$$
\end{lemma}
With the above lemma, 
\begin{align*}
V^f((1-\alpha) D + \alpha D_0) = & D_f((1-\alpha) CP + \alpha U\otimes P_0, U\otimes ((1-\alpha)P+\alpha P_0))\\ 
= & D_f((1-\alpha) CP + \alpha U\otimes P_0, (1-\alpha) U\otimes P+\alpha U\otimes P_0))\\
\leq & (1-\alpha) V^f(D)
\end{align*}

\end{proof}

The above theorem implies that if we can estimate the average distribution of a group of agents perfectly and use it to calculate $f$-variety, $f$-variety can separate experts and non-experts and satisfy information-monotonicity perfectly. Since in this case, non-experts' $f$-variety will be zero, and experts' $f$-variety will be a positive number. Moreover, adding non-experts into an existed group will decrease the $f$-variety.

However, we cannot obtain a perfect estimation of distribution in practice since we only have a finite number of samples. In practice, when we ask for additional prediction, we provide the respondents 11 discrete options $\left\{0\%,10\%,...,100\%\right\}$ and use the empirical histogram to estimate the distribution and calculate $f$-variety. We will provide several numerical experiments to show the robustness of our empirical estimation method in Section~\ref{sec:num}.


\begin{figure}[!h]
\begin{center}
\subfigure
{
\begin{tikzpicture}
\sffamily
\begin{axis}[
legend style={font=\small,
 nodes={scale=1, transform shape},
 at={(.03,.93)},
 anchor=north west,
 draw=none},
legend cell align={left},
width = 2.7in, height = 1.8in,
ylabel near ticks,
ytick={0, .2, .4, .6, .8, 1., 1.2, 1.4, 1.6, 1.8},
ylabel = {\small Probability density},
xlabel near ticks,
every tick label/.append style={font=\scriptsize},
xmin=-.05,xmax=1.05,ymin=0,ymax=1.8,
xtick={0, .1, .2, .3, .4, .5, .6, .7, .8, .9, 1},
xlabel={\small Prediction},
xlabel style = {yshift=0.05in},
yticklabel style={
  /pgf/number format/fixed,
  /pgf/number format/precision=5
},
no markers,
legend image code/.code={
        \draw [#1] (0cm,-0.1cm) rectangle (0.2cm,0.1cm); },
]
    \addplot gnuplot [raw gnuplot,fill,opacity=.7,color=red,solid]
    {
        \Defs
        plot [x=0:1] beta(x,8,3)*.5;
    };
    \addlegendentry{choice $+$}
    \addplot gnuplot [raw gnuplot,fill,opacity=.7,color=cyan,solid]
    {
        \Defs
        plot [x=0:1] beta(x,4,5)*.5;
    };
    \addlegendentry{choice $-$}
\end{axis}
\end{tikzpicture}
}
\subfigure
{
\begin{tikzpicture}
\sffamily
\begin{axis}[
legend style={font=\small,
 nodes={scale=1, transform shape},
 at={(.03,.93)},
 anchor=north west,
 draw=none},
legend cell align={left},
width = 2.7in, height = 1.8in,
ylabel near ticks,
ytick={0, .02, .04, .06, .08, 0.1, .12, .14, .16, .18},
ylabel = {\small Ratio of respondents},
xlabel near ticks,
every tick label/.append style={font=\scriptsize},
xmin=-.05,xmax=1.05,ymin=0,ymax=.18,
xtick={0, .1, .2, .3, .4, .5, .6, .7, .8, .9, 1},
xlabel={\small Prediction},
xlabel style = {yshift=0.05in},
yticklabel style={
  /pgf/number format/fixed,
  /pgf/number format/precision=5
},
no markers,
legend image code/.code={
        \draw [#1] (0cm,-0.1cm) rectangle (0.2cm,0.1cm); },
]
    \addplot gnuplot [raw gnuplot,fill,opacity=.7,color=red,solid]
    {
        \Defs
        plot [x=-.05:1.05] 
        const(x,-.05,.05,0)+
        const(x,0.05,0.15,0.00000433176)+
        const(x,0.15,0.25,0.000203568)+
        const(x,0.25,0.35,0.00220273)+
        const(x,0.35,0.45,0.0112853)+
        const(x,0.45,0.55,0.0360839)+
        const(x,0.55,0.65,0.0810239)+
        const(x,0.65,0.75,0.131993)+
        const(x,0.75,0.85,0.147302)+
        const(x,0.85,0.95,0.08415)+
        const(x,0.95,1.05,0.00575178);
    };
    \addlegendentry{choice $+$}
    \addplot gnuplot [raw gnuplot,fill,opacity=.7,color=cyan,solid]
    {
        \Defs
        plot [x=-.05:1.05] 
        const(x,-.05,.05,0.000185876)+
        const(x,0.05,0.15,0.0104904)+
        const(x,0.15,0.25,0.0462314)+
        const(x,0.25,0.35,0.0898926)+
        const(x,0.35,0.45,0.114722)+
        const(x,0.45,0.55,0.108288)+
        const(x,0.55,0.65,0.0771449)+
        const(x,0.65,0.75,0.0393965)+
        const(x,0.75,0.85,0.0122221)+
        const(x,0.85,0.95,0.00141923)+
        const(x,0.95,1.05,0.00000770244);
    };
    \addlegendentry{choice $-$}
\end{axis}
\end{tikzpicture}
}
\end{center}
\caption{\small \textbf{Empirical histogram:} The subgraph on the left is the true underlying distribution. In practice, we use histograms over finite prediction options to estimate the joint distribution, which is shown on the right.} \label{fig:discrete}
\end{figure}


%

\section{Numerical experiments}\label{sec:num}





In this section, we will generate multiple choice-prediction pairs of experts/non-experts and mix them with different ratios. Ideally, the $f$-variety will decrease as the ratio of non-experts increases (monotonicity) and vanish when there are only non-experts (separation). We will test the robustness of the empirical distribution's $f$-variety by checking whether it satisfies the monotonicity and separation property. 

To generate the synthetic data, we first determine the underlying distributions of the experts and non-experts. We then generate the choice-prediction pairs according to the underlying distribution for experts and non-experts, regarding different sample sizes (e.g. 100, 200, 500, 1000). We conduct multiple numerical experiments with different underlying distributions of the experts. 

We test the empirical Tvd-variety\footnote{We also test Pearson-variety and Hellinger-variety. The results are similar and shown in the Appendix.} by performing 4 groups of experiments and use Beta distribution~\cite{johnson1995continuous} to model the underlying conditional distribution over the predictions. For all cases, we choose Beta(2,2) as the distribution over non-experts' predictions while we use different distributions over experts' choice-prediction pairs in different groups. We show the results in Figure \ref{fig:numfig}. It shows that empirical Tvd-variety approximately decreases with the ratio of non-experts. The empirical Tvd-variety becomes closer to true Tvd-variety as sample size goes larger\footnote{The error bar in the figure shows the standard deviation.}. 

\begin{figure}[htbp]
    \begin{minipage}[b]{0.48\textwidth}
        \centering
        \begin{tikzpicture}[scale=.5]
\sffamily
\begin{axis}[
title style={align=center,yshift=-.0in},
legend style={font=\large,
 nodes={scale=1, transform shape},
 at={(.7,.9)},
 anchor=north west,
 draw=none},
legend cell align={left},
width = 4.5in, height = 3in,
ylabel near ticks,
ytick={0, .05, .1, .15, .2, .25, .3, .35, .4},
ylabel = {Tvd-variety},
xlabel near ticks,
every tick label/.append style={font=\normalsize},
xmin=-0.05,xmax=1.05,ymin=0,ymax=.4,
xtick={0, .1, .2, .3, .4, .5, .6, .7, .8, .9, 1},
xlabel={ratio of uninformative participants},
xlabel style = {font=\Large},
ylabel style = {font=\Large},
yticklabel style={
  /pgf/number format/fixed,
  /pgf/number format/precision=5
},
]

\addplot[solid, mark=., mark options={scale=.8}, gray, style=thick,]
table[]{
0. 0.329576
1. 0.
};
\addlegendentry{theoretical}

\addplot[solid, mark=o, mark options={scale=.8}, red]
plot [error bars/.cd, y dir = both, y explicit]
table[y error index=2]{
0.0030 0.3231 0.0115
0.1030 0.2917 0.0120
0.2030 0.2601 0.0118
0.3030 0.2279 0.0125
0.4030 0.1960 0.0126
0.5030 0.1658 0.0132
0.6030 0.1343 0.0132
0.7030 0.1036 0.0138
0.8030 0.0738 0.0131
0.9030 0.0493 0.0115
1.0030 0.0384 0.0095
};
\addlegendentry{$n=1000$}

\addplot[solid, mark=o, mark options={scale=.8}, orange]
plot [error bars/.cd, y dir = both, y explicit]
table[y error index=2]{
0.0010 0.3256 0.0158
0.1010 0.2927 0.0169
0.2010 0.2625 0.0172
0.3010 0.2311 0.0175
0.4010 0.2007 0.0178
0.5010 0.1691 0.0181
0.6010 0.1376 0.0183
0.7010 0.1087 0.0176
0.8010 0.0827 0.0174
0.9010 0.0628 0.0147
1.0010 0.0548 0.0135
};
\addlegendentry{$n=500$}

\addplot[solid, mark=o, mark options={scale=.8}, green]
plot [error bars/.cd, y dir = both, y explicit]
table[y error index=2]{
-0.0010 0.3300 0.0241
0.0990 0.2988 0.0244
0.1990 0.2681 0.0268
0.2990 0.2389 0.0277
0.3990 0.2082 0.0272
0.4990 0.1769 0.0275
0.5990 0.1504 0.0276
0.6990 0.1251 0.0254
0.7990 0.1046 0.0239
0.8990 0.0914 0.0232
0.9990 0.0862 0.0213
};
\addlegendentry{$n=200$}

\addplot[solid, mark=o, mark options={scale=.8}, cyan]
plot [error bars/.cd, y dir = both, y explicit]
table[y error index=2]{
-0.0030 0.3340 0.0343
0.0970 0.3026 0.0343
0.1970 0.2758 0.0362
0.2970 0.2460 0.0374
0.3970 0.2199 0.0366
0.4970 0.1921 0.0344
0.5970 0.1679 0.0341
0.6970 0.1480 0.0336
0.7970 0.1332 0.0326
0.8970 0.1235 0.0306
0.9970 0.1204 0.0312
};
\addlegendentry{$n=100$}

\end{axis}
\end{tikzpicture}

\begin{tikzpicture}[scale=.5]
\sffamily
\begin{axis}[
width = 5.47in, height = 1.5in,
ylabel near ticks,
ytick=\empty,
ylabel = {\small Probability Density},
xmin=-.84,xmax=6,ymin=0,ymax=1.8,
xtick=\empty,
yticklabel style={
  /pgf/number format/fixed,
  /pgf/number format/precision=5
},
no markers,
hide axis,
]
    \addplot gnuplot [raw gnuplot,fill,opacity=.7,color=red,solid]
    {
        \Defs
        plot [x=0:1] beta(x,8,3)*.5;
    };
    \addplot gnuplot [raw gnuplot,fill,opacity=.7,color=cyan,solid]
    {
        \Defs
        plot [x=0:1] beta(x,4,5)*.5;
    };

    \addplot gnuplot [raw gnuplot,fill,opacity=.7,color=red,solid]
    {
        \Defs
        plot [x=1:2] (.8*beta(x-1,8,3)+.2*beta(x-1,2,2))*.5;
    };
    \addplot gnuplot [raw gnuplot,fill,opacity=.7,color=cyan,solid]
    {
        \Defs
        plot [x=1:2] (.8*beta(x-1,4,5)+.2*beta(x-1,2,2))*.5;
    };

    \addplot gnuplot [raw gnuplot,fill,opacity=.7,color=red,solid]
    {
        \Defs
        plot [x=2:3] (.6*beta(x-2,8,3)+.4*beta(x-2,2,2))*.5;
    };
    \addplot gnuplot [raw gnuplot,fill,opacity=.7,color=cyan,solid]
    {
        \Defs
        plot [x=2:3] (.6*beta(x-2,4,5)+.4*beta(x-2,2,2))*.5;
    };

    \addplot gnuplot [raw gnuplot,fill,opacity=.7,color=red,solid]
    {
        \Defs
        plot [x=3:4] (.4*beta(x-3,8,3)+.6*beta(x-3,2,2))*.5;
    };
    \addplot gnuplot [raw gnuplot,fill,opacity=.7,color=cyan,solid]
    {
        \Defs
        plot [x=3:4] (.4*beta(x-3,4,5)+.6*beta(x-3,2,2))*.5;
    };

    \addplot gnuplot [raw gnuplot,fill,opacity=.7,color=red,solid]
    {
        \Defs
        plot [x=4:5] (.2*beta(x-4,8,3)+.8*beta(x-4,2,2))*.5;
    };
    \addplot gnuplot [raw gnuplot,fill,opacity=.7,color=cyan,solid]
    {
        \Defs
        plot [x=4:5] (.2*beta(x-4,4,5)+.8*beta(x-4,2,2))*.5;
    };

    \addplot gnuplot [raw gnuplot,fill,opacity=.7,color=red,solid]
    {
        \Defs
        plot [x=5:6] beta(x-5,2,2)*.5;
    };
    \addplot gnuplot [raw gnuplot,fill,opacity=.7,color=cyan,solid]
    {
        \Defs
        plot [x=5:6] beta(x-5,2,2)*.5;
    };

    \end{axis}
\end{tikzpicture}
        \captionof*{figure}{\small (a) {\bf Uniform-1:} For experts, $q_+=q_-=0.5$, $\mathbf{q_+}$ is Beta(8,3) and $\mathbf{q_-}$ is Beta(4,5).}
    \end{minipage}
    \hfill
    \begin{minipage}[b]{0.48\textwidth}
        \centering
        \begin{tikzpicture}[scale=.5]
\sffamily
\begin{axis}[
title style={align=center,yshift=-.0in},
legend style={font=\large,
 nodes={scale=1, transform shape},
 at={(.7,.9)},
 anchor=north west,
 draw=none},
legend cell align={left},
width = 4.5in, height = 3in,
ylabel near ticks,
ytick={0, .05, .1, .15, .2, .25, .3, .35, .4},
ylabel = {Tvd-variety},
xlabel near ticks,
every tick label/.append style={font=\normalsize},
xmin=-0.05,xmax=1.05,ymin=0,ymax=.4,
xtick={0, .1, .2, .3, .4, .5, .6, .7, .8, .9, 1},
xlabel={ratio of uninformative participants},
xlabel style = {font=\Large},
ylabel style = {font=\Large},
yticklabel style={
  /pgf/number format/fixed,
  /pgf/number format/precision=5
},
]

\addplot[solid, mark=., mark options={scale=.8}, gray, style=thick,]
table[]{
0. 0.348710
1. 0.
};
\addlegendentry{theoretical}

\addplot[solid, mark=o, mark options={scale=.8}, red]
plot [error bars/.cd, y dir = both, y explicit]
table[y error index=2]{
0.0030 0.3471 0.0115
0.1030 0.3133 0.0120
0.2030 0.2783 0.0123
0.3030 0.2440 0.0123
0.4030 0.2097 0.0129
0.5030 0.1764 0.0132
0.6030 0.1424 0.0135
0.7030 0.1095 0.0134
0.8030 0.0762 0.0132
0.9030 0.0502 0.0117
1.0030 0.0396 0.0095
};
\addlegendentry{$n=1000$}

\addplot[solid, mark=o, mark options={scale=.8}, orange]
plot [error bars/.cd, y dir = both, y explicit]
table[y error index=2]{
0.0010 0.3473 0.0157
0.1010 0.3140 0.0164
0.2010 0.2792 0.0168
0.3010 0.2459 0.0174
0.4010 0.2112 0.0178
0.5010 0.1784 0.0186
0.6010 0.1457 0.0193
0.7010 0.1137 0.0184
0.8010 0.0859 0.0175
0.9010 0.0638 0.0149
1.0010 0.0546 0.0138
};
\addlegendentry{$n=500$}

\addplot[solid, mark=o, mark options={scale=.8}, green]
plot [error bars/.cd, y dir = both, y explicit]
table[y error index=2]{
-0.0010 0.3501 0.0241
0.0990 0.3175 0.0245
0.1990 0.2838 0.0253
0.2990 0.2513 0.0263
0.3990 0.2185 0.0282
0.4990 0.1879 0.0283
0.5990 0.1561 0.0273
0.6990 0.1280 0.0257
0.7990 0.1062 0.0250
0.8990 0.0916 0.0220
0.9990 0.0883 0.0215
};
\addlegendentry{$n=200$}

\addplot[solid, mark=o, mark options={scale=.8}, cyan]
plot [error bars/.cd, y dir = both, y explicit]
table[y error index=2]{
-0.0030 0.3529 0.0333
0.0970 0.3218 0.0334
0.1970 0.2913 0.0350
0.2970 0.2581 0.0360
0.3970 0.2293 0.0361
0.4970 0.1994 0.0380
0.5970 0.1734 0.0367
0.6970 0.1530 0.0344
0.7970 0.1339 0.0309
0.8970 0.1255 0.0311
0.9970 0.1192 0.0301
};
\addlegendentry{$n=100$}

\end{axis}
\end{tikzpicture}

\begin{tikzpicture}[scale=.5]
\sffamily
\begin{axis}[
width = 5.47in, height = 1.5in,
ylabel near ticks,
ytick=\empty,
ylabel = {\small Probability Density},
xmin=-.84,xmax=6,ymin=0,ymax=1.8,
xtick=\empty,
yticklabel style={
  /pgf/number format/fixed,
  /pgf/number format/precision=5
},
no markers,
hide axis,
]
    \addplot gnuplot [raw gnuplot,fill,opacity=.7,color=red,solid]
    {
        \Defs
        plot [x=0:1] beta(x,8,3)*.3;
    };
    \addplot gnuplot [raw gnuplot,fill,opacity=.7,color=cyan,solid]
    {
        \Defs
        plot [x=0:1] beta(x,4,5)*.7;
    };

    \addplot gnuplot [raw gnuplot,fill,opacity=.7,color=red,solid]
    {
        \Defs
        plot [x=1:2] (.8*beta(x-1,8,3)*.3+.2*beta(x-1,2,2)*.5);
    };
    \addplot gnuplot [raw gnuplot,fill,opacity=.7,color=cyan,solid]
    {
        \Defs
        plot [x=1:2] (.8*beta(x-1,4,5)*.7+.2*beta(x-1,2,2)*.5);
    };

    \addplot gnuplot [raw gnuplot,fill,opacity=.7,color=red,solid]
    {
        \Defs
        plot [x=2:3] (.6*beta(x-2,8,3)*.3+.4*beta(x-2,2,2)*.5);
    };
    \addplot gnuplot [raw gnuplot,fill,opacity=.7,color=cyan,solid]
    {
        \Defs
        plot [x=2:3] (.6*beta(x-2,4,5)*.7+.4*beta(x-2,2,2)*.5);
    };

    \addplot gnuplot [raw gnuplot,fill,opacity=.7,color=red,solid]
    {
        \Defs
        plot [x=3:4] (.4*beta(x-3,8,3)*.3+.6*beta(x-3,2,2)*.5);
    };
    \addplot gnuplot [raw gnuplot,fill,opacity=.7,color=cyan,solid]
    {
        \Defs
        plot [x=3:4] (.4*beta(x-3,4,5)*.7+.6*beta(x-3,2,2)*.5);
    };

    \addplot gnuplot [raw gnuplot,fill,opacity=.7,color=red,solid]
    {
        \Defs
        plot [x=4:5] (.2*beta(x-4,8,3)*.3+.8*beta(x-4,2,2)*.5);
    };
    \addplot gnuplot [raw gnuplot,fill,opacity=.7,color=cyan,solid]
    {
        \Defs
        plot [x=4:5] (.2*beta(x-4,4,5)*.7+.8*beta(x-4,2,2)*.5);
    };

    \addplot gnuplot [raw gnuplot,fill,opacity=.7,color=red,solid]
    {
        \Defs
        plot [x=5:6] beta(x-5,2,2)*.5;
    };
    \addplot gnuplot [raw gnuplot,fill,opacity=.7,color=cyan,solid]
    {
        \Defs
        plot [x=5:6] beta(x-5,2,2)*.5;
    };

    \end{axis}
\end{tikzpicture}
        \captionof*{figure}{\small (b) {\bf Non-uniform-1:} For experts, $q_+=0.3$, $q_-=0.7$, $\mathbf{q_+}$ is Beta(8,3) and $\mathbf{q_-}$ is Beta(4,5).}
    \end{minipage}
    \vfill

    \begin{minipage}[b]{0.48\textwidth}
        \centering
        \begin{tikzpicture}[scale=.5]
\sffamily
\begin{axis}[
title style={align=center,yshift=-.0in},
legend style={font=\large,
 nodes={scale=1, transform shape},
 at={(.7,.9)},
 anchor=north west,
 draw=none},
legend cell align={left},
width = 4.5in, height = 3in,
ylabel near ticks,
ytick={0, .05, .1, .15, .2, .25, .3, .35, .4},
ylabel = {Tvd-variety},
xlabel near ticks,
every tick label/.append style={font=\normalsize},
xmin=-0.05,xmax=1.05,ymin=0,ymax=.4,
xtick={0, .1, .2, .3, .4, .5, .6, .7, .8, .9, 1},
xlabel={ratio of uninformative participants},
xlabel style = {font=\Large},
ylabel style = {font=\Large},
yticklabel style={
  /pgf/number format/fixed,
  /pgf/number format/precision=5
},
]

\addplot[solid, mark=., mark options={scale=.8}, gray, style=thick,]
table[]{
0 0.152907
1. 0.
};
\addlegendentry{theoretical}

\addplot[solid, mark=o, mark options={scale=.8}, red]
plot [error bars/.cd, y dir = both, y explicit]
table[y error index=2]{
0.0030 0.1541 0.0146
0.1030 0.1391 0.0149
0.2030 0.1253 0.0140
0.3030 0.1105 0.0147
0.4030 0.0969 0.0135
0.5030 0.0832 0.0135
0.6030 0.0708 0.0131
0.7030 0.0582 0.0123
0.8030 0.0492 0.0114
0.9030 0.0413 0.0100
1.0030 0.0384 0.0097
};
\addlegendentry{$n=1000$}

\addplot[solid, mark=o, mark options={scale=.8}, orange]
plot [error bars/.cd, y dir = both, y explicit]
table[y error index=2]{
0.0010 0.1559 0.0203
0.1010 0.1424 0.0196
0.2010 0.1295 0.0200
0.3010 0.1159 0.0186
0.4010 0.1045 0.0186
0.5010 0.0919 0.0179
0.6010 0.0795 0.0171
0.7010 0.0696 0.0158
0.8010 0.0617 0.0144
0.9010 0.0576 0.0140
1.0010 0.0537 0.0130
};
\addlegendentry{$n=500$}

\addplot[solid, mark=o, mark options={scale=.8}, green]
plot [error bars/.cd, y dir = both, y explicit]
table[y error index=2]{
-0.0010 0.1638 0.0297
0.0990 0.1543 0.0282
0.1990 0.1429 0.0293
0.2990 0.1298 0.0269
0.3990 0.1201 0.0267
0.4990 0.1110 0.0263
0.5990 0.1027 0.0242
0.6990 0.0959 0.0229
0.7990 0.0911 0.0216
0.8990 0.0875 0.0220
0.9990 0.0874 0.0213
};
\addlegendentry{$n=200$}

\addplot[solid, mark=o, mark options={scale=.8}, cyan]
plot [error bars/.cd, y dir = both, y explicit]
table[y error index=2]{
-0.0030 0.1806 0.0374
0.0970 0.1691 0.0380
0.1970 0.1613 0.0358
0.2970 0.1542 0.0347
0.3970 0.1434 0.0344
0.4970 0.1389 0.0336
0.5970 0.1293 0.0336
0.6970 0.1270 0.0316
0.7970 0.1234 0.0302
0.8970 0.1208 0.0299
0.9970 0.1213 0.0302
};
\addlegendentry{$n=100$}

\end{axis}
\end{tikzpicture}

\begin{tikzpicture}[scale=.5]
\sffamily
\begin{axis}[
width = 5.47in, height = 1.5in,
ylabel near ticks,
ytick=\empty,
ylabel = {\small Probability Density},
xmin=-.84,xmax=6,ymin=0,ymax=1.8,
xtick=\empty,
yticklabel style={
  /pgf/number format/fixed,
  /pgf/number format/precision=5
},
no markers,
hide axis,
]
    \addplot gnuplot [raw gnuplot,fill,opacity=.7,color=red,solid]
    {
        \Defs
        plot [x=0:1] beta(x,6,6)*.5;
    };
    \addplot gnuplot [raw gnuplot,fill,opacity=.7,color=cyan,solid]
    {
        \Defs
        plot [x=0:1] beta(x,2,3)*.5;
    };

    \addplot gnuplot [raw gnuplot,fill,opacity=.7,color=red,solid]
    {
        \Defs
        plot [x=1:2] (.8*beta(x-1,6,6)*.5+.2*beta(x-1,2,2)*.5);
    };
    \addplot gnuplot [raw gnuplot,fill,opacity=.7,color=cyan,solid]
    {
        \Defs
        plot [x=1:2] (.8*beta(x-1,2,3)*.5+.2*beta(x-1,2,2)*.5);
    };

    \addplot gnuplot [raw gnuplot,fill,opacity=.7,color=red,solid]
    {
        \Defs
        plot [x=2:3] (.6*beta(x-2,6,6)*.5+.4*beta(x-2,2,2)*.5);
    };
    \addplot gnuplot [raw gnuplot,fill,opacity=.7,color=cyan,solid]
    {
        \Defs
        plot [x=2:3] (.6*beta(x-2,2,3)*.5+.4*beta(x-2,2,2)*.5);
    };

    \addplot gnuplot [raw gnuplot,fill,opacity=.7,color=red,solid]
    {
        \Defs
        plot [x=3:4] (.4*beta(x-3,6,6)*.5+.6*beta(x-3,2,2)*.5);
    };
    \addplot gnuplot [raw gnuplot,fill,opacity=.7,color=cyan,solid]
    {
        \Defs
        plot [x=3:4] (.4*beta(x-3,2,3)*.5+.6*beta(x-3,2,2)*.5);
    };

    \addplot gnuplot [raw gnuplot,fill,opacity=.7,color=red,solid]
    {
        \Defs
        plot [x=4:5] (.2*beta(x-4,6,6)*.5+.8*beta(x-4,2,2)*.5);
    };
    \addplot gnuplot [raw gnuplot,fill,opacity=.7,color=cyan,solid]
    {
        \Defs
        plot [x=4:5] (.2*beta(x-4,2,3)*.5+.8*beta(x-4,2,2)*.5);
    };

    \addplot gnuplot [raw gnuplot,fill,opacity=.7,color=red,solid]
    {
        \Defs
        plot [x=5:6] beta(x-5,2,2)*.5;
    };
    \addplot gnuplot [raw gnuplot,fill,opacity=.7,color=cyan,solid]
    {
        \Defs
        plot [x=5:6] beta(x-5,2,2)*.5;
    };

    \end{axis}
\end{tikzpicture}
        \captionof*{figure}{\small (c) {\bf Uniform-2:} For experts, $q_+=q_-=0.5$, $\mathbf{q_+}$ is Beta(6,6) and $\mathbf{q_-}$ is Beta(2,3).}
    \end{minipage}
    \hfill
    \begin{minipage}[b]{0.48\textwidth}
        \centering
        \begin{tikzpicture}[scale=.5]
\sffamily
\begin{axis}[
title style={align=center,yshift=-.0in},
legend style={font=\large,
 nodes={scale=1, transform shape},
 at={(.7,.9)},
 anchor=north west,
 draw=none},
legend cell align={left},
width = 4.5in, height = 3in,
ylabel near ticks,
ytick={0, .05, .1, .15, .2, .25, .3, .35, .4},
ylabel = {Tvd-variety},
xlabel near ticks,
every tick label/.append style={font=\normalsize},
xmin=-0.05,xmax=1.05,ymin=0,ymax=.4,
xtick={0, .1, .2, .3, .4, .5, .6, .7, .8, .9, 1},
xlabel={ratio of uninformative participants},
xlabel style = {font=\Large},
ylabel style = {font=\Large},
yticklabel style={
  /pgf/number format/fixed,
  /pgf/number format/precision=5
},
]

\addplot[solid, mark=., mark options={scale=.8}, gray, style=thick,]
table[]{
0. 0.2
1. 0.
};
\addlegendentry{theoretical}

\addplot[solid, mark=o, mark options={scale=.8}, red]
plot [error bars/.cd, y dir = both, y explicit]
table[y error index=2]{
0.0030 0.2012 0.0134
0.1030 0.1815 0.0140
0.2030 0.1621 0.0137
0.3030 0.1427 0.0136
0.4030 0.1253 0.0139
0.5030 0.1061 0.0129
0.6030 0.0882 0.0129
0.7030 0.0716 0.0125
0.8030 0.0563 0.0118
0.9030 0.0437 0.0101
1.0030 0.0393 0.0100
};
\addlegendentry{$n=1000$}

\addplot[solid, mark=o, mark options={scale=.8}, orange]
plot [error bars/.cd, y dir = both, y explicit]
table[y error index=2]{
0.0010 0.2029 0.0186
0.1010 0.1847 0.0188
0.2010 0.1663 0.0191
0.3010 0.1484 0.0185
0.4010 0.1303 0.0187
0.5010 0.1139 0.0185
0.6010 0.0960 0.0169
0.7010 0.0810 0.0158
0.8010 0.0677 0.0149
0.9010 0.0584 0.0140
1.0010 0.0549 0.0131
};
\addlegendentry{$n=500$}

\addplot[solid, mark=o, mark options={scale=.8}, green]
plot [error bars/.cd, y dir = both, y explicit]
table[y error index=2]{
-0.0010 0.2120 0.0276
0.0990 0.1952 0.0278
0.1990 0.1774 0.0272
0.2990 0.1628 0.0267
0.3990 0.1456 0.0275
0.4990 0.1314 0.0258
0.5990 0.1166 0.0246
0.6990 0.1047 0.0244
0.7990 0.0948 0.0231
0.8990 0.0895 0.0213
0.9990 0.0863 0.0214
};
\addlegendentry{$n=200$}

\addplot[solid, mark=o, mark options={scale=.8}, cyan]
plot [error bars/.cd, y dir = both, y explicit]
table[y error index=2]{
-0.0030 0.2243 0.0377
0.0970 0.2092 0.0377
0.1970 0.1956 0.0373
0.2970 0.1797 0.0363
0.3970 0.1665 0.0361
0.4970 0.1560 0.0341
0.5970 0.1427 0.0341
0.6970 0.1316 0.0317
0.7970 0.1254 0.0311
0.8970 0.1226 0.0319
0.9970 0.1206 0.0296
};
\addlegendentry{$n=100$}

\end{axis}
\end{tikzpicture}

\begin{tikzpicture}[scale=.5]
\sffamily
\begin{axis}[
width = 5.47in, height = 1.5in,
ylabel near ticks,
ytick=\empty,
ylabel = {\small Probability Density},
xmin=-.84,xmax=6,ymin=0,ymax=1.8,
xtick=\empty,
yticklabel style={
  /pgf/number format/fixed,
  /pgf/number format/precision=5
},
no markers,
hide axis,
]
    \addplot gnuplot [raw gnuplot,fill,opacity=.7,color=red,solid]
    {
        \Defs
        plot [x=0:1] beta(x,6,6)*.3;
    };
    \addplot gnuplot [raw gnuplot,fill,opacity=.7,color=cyan,solid]
    {
        \Defs
        plot [x=0:1] beta(x,2,3)*.7;
    };

    \addplot gnuplot [raw gnuplot,fill,opacity=.7,color=red,solid]
    {
        \Defs
        plot [x=1:2] (.8*beta(x-1,6,6)*.3+.2*beta(x-1,2,2)*.5);
    };
    \addplot gnuplot [raw gnuplot,fill,opacity=.7,color=cyan,solid]
    {
        \Defs
        plot [x=1:2] (.8*beta(x-1,2,3)*.7+.2*beta(x-1,2,2)*.5);
    };

    \addplot gnuplot [raw gnuplot,fill,opacity=.7,color=red,solid]
    {
        \Defs
        plot [x=2:3] (.6*beta(x-2,6,6)*.3+.4*beta(x-2,2,2)*.5);
    };
    \addplot gnuplot [raw gnuplot,fill,opacity=.7,color=cyan,solid]
    {
        \Defs
        plot [x=2:3] (.6*beta(x-2,2,3)*.7+.4*beta(x-2,2,2)*.5);
    };

    \addplot gnuplot [raw gnuplot,fill,opacity=.7,color=red,solid]
    {
        \Defs
        plot [x=3:4] (.4*beta(x-3,6,6)*.3+.6*beta(x-3,2,2)*.5);
    };
    \addplot gnuplot [raw gnuplot,fill,opacity=.7,color=cyan,solid]
    {
        \Defs
        plot [x=3:4] (.4*beta(x-3,2,3)*.7+.6*beta(x-3,2,2)*.5);
    };

    \addplot gnuplot [raw gnuplot,fill,opacity=.7,color=red,solid]
    {
        \Defs
        plot [x=4:5] (.2*beta(x-4,6,6)*.3+.8*beta(x-4,2,2)*.5);
    };
    \addplot gnuplot [raw gnuplot,fill,opacity=.7,color=cyan,solid]
    {
        \Defs
        plot [x=4:5] (.2*beta(x-4,2,3)*.7+.8*beta(x-4,2,2)*.5);
    };

    \addplot gnuplot [raw gnuplot,fill,opacity=.7,color=red,solid]
    {
        \Defs
        plot [x=5:6] beta(x-5,2,2)*.5;
    };
    \addplot gnuplot [raw gnuplot,fill,opacity=.7,color=cyan,solid]
    {
        \Defs
        plot [x=5:6] beta(x-5,2,2)*.5;
    };

    \end{axis}
\end{tikzpicture}
        \captionof*{figure}{\small (d) {\bf Non-uniform-2:} For experts, $q_+=0.3$, $q_-=0.7$, $\mathbf{q_+}$ is Beta(6,6) and $\mathbf{q_-}$ is Beta(2,3).}
    \end{minipage}
    
    \captionof{figure}{\small {\bf Tvd-variety {\sl v.s.} ratio of non-experts:} In all cases, we choose Beta(2,3) as the distribution over non-experts’ prediction. In the first column (a) and (c), the choice of experts are uniform, while in the second column (b) and (d), $q_+=0.3$ and $q_-=0.7$. In the first row (a) and (b), $\mathbf{q}_+$ is Beta(8,3) and $\mathbf{q}_-$ is Beta(4,5) while in the second row (c) and (d), $\mathbf{q}_+$ is Beta(6,6) and $\mathbf{q}_-$ is Beta(2,3). In each figure, the charts below show the joint distribution at different ratios of non-experts. The symmetric difference vanishes when the ratio is one, {\sl i.e.}, there are only non-experts. The above line charts show that Tvd-variety decreases with the ratio of non-experts which verifies the {\bf Monotonicity}. When the number of samples increases, Tvd-variety almost vanishes when there are only non-experts, which verifies the {\bf Separation} property. As the sample size increases, the empirical value becomes closer to the theoretical value (black line) which is calculated by the perfect information about the joint distribution.}
    \label{fig:numfig}
\end{figure}

\section{Case studies}\label{sec:cases}

We perform two real-world case studies. In each study, we pick a topic and design an online survey about this topic. Each survey consists of multiple subjective questions in the format of “Which one do you prefer? $X$ or $Y$? What percentage of people will choose $X$?”. $(X,Y)$ pair represents comparable athletes, stand-up comedians, or other concepts. The orders of options are randomly shuffled. For the prediction question, we provide 11 prediction options $\{0\%,10\%,\cdots,100\%\}$. We conduct the survey on an online survey platform and our respondents are recruited by the platform. Each survey contains an attention test\footnote{Each attention test has the following form: ``There are $n$ red balls and $m$ blue balls with the same shape in the box. One is randomly selected. What percentage do you think is the probability of a red/blue ball?''} and rewards respondents who pass the test a flat participation fee. In total, we ask 15 questions
. Each question is answered by above 600 respondents on average and we pay \$0.5 for each answer sheet. 



\paragraph{Evaluation}


We choose Tvd-variety in the analysis of the studies. We will evaluate Tvd-variety through two aspects:
\begin{itemize}
    \item {\bf Cross respondents:} We divided the respondents into two groups through side questions (e.g. do you watch sports frequently). It is commonly assumed that one group is more familiar with the questions than the other. We can evaluate the informativeness metric by checking whether a high-expertise group has a higher metric. 
    \item {\bf Cross questions:} We divide the questions into easy and hard categories in advance. We can evaluate the informativeness metric by checking whether the easy questions have higher metrics than the hard ones.
\end{itemize}

We also compare Tvd-variety with a baseline metric which measures the degree of unbalance of the statistics. Our case studies focus on binary cases. In the binary case, the baseline metric is defined as \[Baseline:=|q_+-50\%|.\] 

More uniform statistics will have a lower baseline score. 


\subsection{Case study for athletes}
We conducted a study about the preference for athletes. We asked 7 questions and there were 656 respondents, of which 306 were men and 350 were women. Additionally, we asked respondents whether they often watch sports. 215 of respondents reported they often watch sports and the others reported they do not.

Here we will give a sample question. All 7 questions have the same format. We attach the contents of these questions in the appendix.
\begin{itemize}
    \item Which soccer player do you prefer? Andrés Iniesta or Luka Modric?
    \item What percentage of people do you think prefer Andrés Iniesta?
\end{itemize}

Figure \ref{ath:watch} shows the comparison between Tvd-variety and the baseline. In the performance comparison in the groups of often watching sports and of not, the baseline correctly suggests that people who often watch sports are more informative for 4 of the questions, but gives the opposite results for the other 3 questions. In contrast, among all 7 questions, Tvd-variety correctly suggests that people who often watch sports are more informative for 6 questions.
\begin{figure}[!h]
\definecolor{col1}{HTML}{ff919e}
\definecolor{col2}{HTML}{00aea5}
\centering
\begin{tikzpicture}[font=\small]
\sffamily
\begin{axis}[
  ybar,
  bar width=10pt,
  ylabel={Tvd-variety},
  ylabel near ticks,
  width = 5in, height = 1.5in,
  ymin=0,
  ytick=\empty,
  xtick=data,
  tick label style={font=\scriptsize},
  axis x line=bottom,
  axis y line=left,
  enlarge x limits=0.1,
  symbolic x coords={Basketball(M),Soccer(M),Basketball(F),Snooker,Formula One,Volleyball(F),Ping-pong(M)},
  xticklabel style={anchor=base,yshift=-\baselineskip},
  nodes near coords={\pgfmathprintnumber\pgfplotspointmeta},
  legend image code/.code={
        \draw [#1] (0cm,-0.1cm) rectangle (0.2cm,0.1cm); },
  legend style={
    font=\small,
    nodes={scale=1, transform shape},
    at={(0.5,-0.25)},
    anchor=north,
    draw=none,
    legend columns=-1},
]
  \addplot[fill=white,bar shift=-6pt,color=col1,font=\tiny] coordinates {
(Basketball(M),33.7)
(Soccer(M),29.5)
(Basketball(F),29.1)
(Snooker,26.7)
(Formula One,19.8)
(Volleyball(F),30.0)
(Ping-pong(M),22.1)
  };
  \addlegendentry{Often watch sports}
  \addplot[fill=white,bar shift=6pt,color=col2,font=\tiny] coordinates {
(Basketball(M),24.8)
(Soccer(M),24.1)
(Basketball(F),25.5)
(Snooker,25.1)
(Formula One,19.6)
(Volleyball(F),26.9)
(Ping-pong(M),24.8)
  };
  \addlegendentry{Not often watch sports}
\end{axis}
\end{tikzpicture}
\begin{tikzpicture}[font=\small]
\sffamily
\begin{axis}[
  ybar,
  bar width=10pt,
  ylabel={\textcolor{white}{|}Baseline\textcolor{white}{|}},
  ylabel near ticks,
  width = 5in, height = 1.5in,
  ymin=0,
  ytick=\empty,
  xtick=data,
  tick label style={font=\scriptsize},
  axis x line=bottom,
  axis y line=left,
  enlarge x limits=0.1,
  symbolic x coords={Basketball(M),Soccer(M),Basketball(F),Snooker,Formula One,Volleyball(F),Ping-pong(M)},
  xticklabel style={anchor=base,yshift=-\baselineskip},
  nodes near coords={\pgfmathprintnumber\pgfplotspointmeta},
  legend image code/.code={
        \draw [#1] (0cm,-0.1cm) rectangle (0.2cm,0.1cm); },
  legend style={
    font=\small,
    nodes={scale=1, transform shape},
    at={(0.5,-0.25)},
    anchor=north,
    draw=none,
    legend columns=-1},
]
  \addplot[fill=white,bar shift=-6pt,color=col1,font=\tiny] coordinates {
(Basketball(M),18.4)
(Soccer(M),6.3)
(Basketball(F),10.9)
(Snooker,2.6)
(Formula One,14.7)
(Volleyball(F),21.6)
(Ping-pong(M),5.3)
  };
  \addplot[fill=white,bar shift=6pt,color=col2,font=\tiny] coordinates {
(Basketball(M),1.0)
(Soccer(M),6.0)
(Basketball(F),13.7)
(Snooker,6.5)
(Formula One,7.1)
(Volleyball(F),12.1)
(Ping-pong(M),6.2)
  };
  
\end{axis}
\end{tikzpicture}
\captionof{figure}{\small {\bf Often watching sports  {\sl v.s.} not:} The questions ask for different sports. Thus, we can use the sports category (e.g. Basketball(M) means men's basketball and Basketball(F) means women's basketball) to represent the questions. The red bar represents the metrics of the group that contains people who often watch sports, while the green bar represents the metrics of the group that contains people who do not often watch sports. The y-axis is 100 times of metrics.}
\label{ath:watch}
\end{figure}



\subsection{Case study for stand-up comedians}
We conducted a study about the preference for stand-up comedians. We asked 8 questions, 4 of which compares native stand-up comedians (type native), other 4 of which compares foreign stand-up comedians (type foreign). There were 632 respondents, of which 262 were men and 370 were women. Additionally, We asked respondents the frequency they watch native/foreign stand-up comedy. The result was shown in the chart below.
\begin{table}[h]
\centering
\begin{tabular}{|l|l|l|l|l|}
\hline
& often & sometimes & occasionally & almost never \\ \hline
native stand-up comedy & 240 & 252 & 123 & 17 \\ \hline
foreign stand-up comedy & 21 & 169 & 248 & 194 \\ \hline
\end{tabular}\caption{\small Frequency of watching stand-up comedy}\label{table:fre}
\end{table}

Here we will give a sample question. All 8 questions have the same format. We attach the contents of these questions in the appendix.
\begin{itemize}
    \item Which stand-up comedian do you prefer? Ronny Chieng or Jimmy OYang?
    \item What percentage of people do you think prefer Ronny Chieng?
\end{itemize}

\paragraph{Cross respondent} For each type, we divide the respondents into two groups. The familiar group for type native is defined as the group of respondents who report that they often or sometimes watch native stand-up comedy. Other respondents are defined as the unfamiliar group. We define familiar and unfamiliar groups for foreign type analogously. Figure \ref{talk:native} shows for native type, Tvd-variety correctly suggests that familiar group has a higher score for all 4 questions while baseline fails in three questions. Figure \ref{talk:foreign} shows for foreign type, both Tvd-variety and baseline correctly suggest that familiar group has a higher score for three questions and fail in one question. 

\begin{figure}[!h]
\definecolor{col1}{HTML}{ff919e}
\definecolor{col2}{HTML}{00aea5}
\centering
\begin{tikzpicture}[font=\small]
\sffamily
\begin{axis}[
  ybar,
  bar width=10pt,
  ylabel={Tvd-variety},
  ylabel near ticks,
  width = 5in, height = 1.5in,
  ymin=0,
  ytick=\empty,
  xtick=data,
  tick label style={font=\scriptsize},
  axis x line=bottom,
  axis y line=left,
  enlarge x limits=0.1,
  symbolic x coords={Native1,Native2,Native3,Native4},
  xticklabel style={anchor=base,yshift=-\baselineskip},
  nodes near coords={\pgfmathprintnumber\pgfplotspointmeta},
  legend image code/.code={
        \draw [#1] (0cm,-0.1cm) rectangle (0.2cm,0.1cm); },
  legend style={
    font=\small,
    nodes={scale=1, transform shape},
    at={(0.5,-0.25)},
    anchor=north,
    draw=none,
    legend columns=-1},
]
  \addplot[fill=white,bar shift=-6pt,color=col1,font=\tiny] coordinates {
(Native1,38.6)
(Native2,30.3)
(Native3,32.9)
(Native4,33.1)
  };
  \addlegendentry{\footnotesize Familiar with native stand-up comedy}
  \addplot[fill=white,bar shift=6pt,color=col2,font=\tiny] coordinates {
(Native1,35.7)
(Native2,26.4)
(Native3,30.0)
(Native4,30.0)
  };
  \addlegendentry{\footnotesize Unfamiliar with native stand-up comedy}
\end{axis}
\end{tikzpicture}
\begin{tikzpicture}[font=\small]
\sffamily
\begin{axis}[
  ybar,
  bar width=10pt,
  ylabel={\textcolor{white}{|}Baseline\textcolor{white}{|}},
  ylabel near ticks,
  width = 5in, height = 1.5in,
  ymin=0,
  ytick=\empty,
  xtick=data,
  tick label style={font=\scriptsize},
  axis x line=bottom,
  axis y line=left,
  enlarge x limits=0.1,
  symbolic x coords={Native1,Native2,Native3,Native4},
  xticklabel style={anchor=base,yshift=-\baselineskip},
  nodes near coords={\pgfmathprintnumber\pgfplotspointmeta},
  legend image code/.code={
        \draw [#1] (0cm,-0.1cm) rectangle (0.2cm,0.1cm); },
  legend style={
    font=\small,
    nodes={scale=1, transform shape},
    at={(0.5,-0.25)},
    anchor=north,
    draw=none,
    legend columns=-1},
]
  \addplot[fill=white,bar shift=-6pt,color=col1,font=\tiny] coordinates {
(Native1,12.8)
(Native2,15.0)
(Native3,10.8)
(Native4,16.9)
  };
  \addplot[fill=white,bar shift=6pt,color=col2,font=\tiny] coordinates {
(Native1,15.0)
(Native2,17.1)
(Native3,12.9)
(Native4,15.7)
  };
  
\end{axis}
\end{tikzpicture}
\captionof{figure}{\small {\bf Familiar with native stand-up comedy {\sl v.s.} unfamiliar:} There are four questions that compare two native stand-up comedians. The red bar represents the metrics of the group that contains people who are familiar with native stand-up comedy, while the green bar represents the metrics of the group who are unfamiliar with native stand-up comedy. The y-axis is 100 times of metrics.}
\label{talk:native}
\end{figure}
\begin{figure}[!h]
\definecolor{col1}{HTML}{ff919e}
\definecolor{col2}{HTML}{00aea5}
\centering
\begin{tikzpicture}[font=\small]
\sffamily
\begin{axis}[
  ybar,
  bar width=10pt,
  ylabel={Tvd-variety},
  ylabel near ticks,
  width = 5in, height = 1.5in,
  ymin=0,
  ytick=\empty,
  xtick=data,
  tick label style={font=\scriptsize},
  axis x line=bottom,
  axis y line=left,
  enlarge x limits=0.1,
  symbolic x coords={Foreign1,Foreign2,Foreign3,Foreign4},
  xticklabel style={anchor=base,yshift=-\baselineskip},
  nodes near coords={\pgfmathprintnumber\pgfplotspointmeta},
  legend image code/.code={
        \draw [#1] (0cm,-0.1cm) rectangle (0.2cm,0.1cm); },
  legend style={
    font=\small,
    nodes={scale=1, transform shape},
    at={(0.5,-0.25)},
    anchor=north,
    draw=none,
    legend columns=-1},
]
  \addplot[fill=white,bar shift=-6pt,color=col1,font=\tiny] coordinates {
(Foreign1,31.6)
(Foreign2,22.6)
(Foreign3,30.0)
(Foreign4,31.1)
  };
  \addlegendentry{\footnotesize Familiar with foreign stand-up comedy}
  \addplot[fill=white,bar shift=6pt,color=col2,font=\tiny] coordinates {
(Foreign1,28.1)
(Foreign2,23.5)
(Foreign3,28.5)
(Foreign4,27.1)
  };
  \addlegendentry{\footnotesize Unfamiliar with foreign stand-up comedy}
\end{axis}
\end{tikzpicture}
\begin{tikzpicture}[font=\small]
\sffamily
\begin{axis}[
  ybar,
  bar width=10pt,
  ylabel={\textcolor{white}{|}Baseline\textcolor{white}{|}},
  ylabel near ticks,
  width = 5in, height = 1.5in,
  ymin=0,
  ytick=\empty,
  xtick=data,
  tick label style={font=\scriptsize},
  axis x line=bottom,
  axis y line=left,
  enlarge x limits=0.1,
  symbolic x coords={Foreign1,Foreign2,Foreign3,Foreign4},
  xticklabel style={anchor=base,yshift=-\baselineskip},
  nodes near coords={\pgfmathprintnumber\pgfplotspointmeta},
  legend image code/.code={
        \draw [#1] (0cm,-0.1cm) rectangle (0.2cm,0.1cm); },
  legend style={
    font=\small,
    nodes={scale=1, transform shape},
    at={(0.5,-0.25)},
    anchor=north,
    draw=none,
    legend columns=-1},
]
  \addplot[fill=white,bar shift=-6pt,color=col1,font=\tiny] coordinates {
(Foreign1,15.8)
(Foreign2,2.6)
(Foreign3,16.8)
(Foreign4,3.7)
  };
  \addplot[fill=white,bar shift=6pt,color=col2,font=\tiny] coordinates {
(Foreign1,7.7)
(Foreign2,3.8)
(Foreign3,14.0)
(Foreign4,1.8)
  };
  
\end{axis}
\end{tikzpicture}
\captionof{figure}{\small {\bf Familiar with foreign stand-up comedy {\sl v.s.} unfamiliar:} There are four questions that compare two foreign stand-up comedians. The red bar represents the metrics of the group that contains people who are familiar with foreign stand-up comedy, while the green bar represents the metrics of the group who are unfamiliar with foreign stand-up comedy. The y-axis is 100 times of metrics.}
\label{talk:foreign}
\end{figure}

\paragraph{Cross questions}
We pick a group of respondents who often or sometimes watch native stand-up comedy but occasionally or almost never watch foreign stand-up comedy. The size of this group is 321. For this group, the comparisons for foreign comedians are much more difficult. We compute the Tvd-variety and baseline score of this group of respondents. Figure \ref{talk:crossq} shows that Tvd-variety successfully separates easy questions (comparisons between native comedians) and hard questions (comparisons between foreign comedians) while baseline assigns an easy question (native3) a lower score than a hard question (foreign3). 

\begin{figure}[!h]
\definecolor{col1}{HTML}{ff919e}
\definecolor{col2}{HTML}{00aea5}
\centering
\begin{tikzpicture}[font=\small]
\sffamily
\begin{axis}[
  ybar,
  bar width=10pt,
  ylabel={Tvd-variety},
  ylabel near ticks,
  width = 5in, height = 1.5in,
  ymin=0,
  ytick=\empty,
  xtick=data,
  tick label style={font=\scriptsize},
  axis x line=bottom,
  axis y line=left,
  enlarge x limits=0.1,
  symbolic x coords={Native1,Native2,Native3,Native4,\textcolor{white}{|},\textcolor{white}{^},Foreign1,Foreign2,Foreign3,Foreign4},
  xticklabel style={anchor=base,yshift=-\baselineskip},
  nodes near coords={\pgfmathprintnumber\pgfplotspointmeta},
  legend image code/.code={
        \draw [#1] (0cm,-0.1cm) rectangle (0.2cm,0.1cm); },
  legend style={
    font=\small,
    nodes={scale=1, transform shape},
    at={(0.5,-0.25)},
    anchor=north,
    draw=none,
    legend columns=-1},
]
  \addplot[fill=white,bar shift=0pt,color=col1,font=\tiny] coordinates {
(Native1,38.2)
(Foreign1,26.9)
(Native2,31.0)
(Foreign2,27.3)
(Native3,33.8)
(Foreign3,30.7)
(Native4,31.9)
(Foreign4,28.2)
  };
  \addplot[col2,sharp plot,densely dashed,font=\fontsize{0.0001}{0.0002}] coordinates
  {(Native1,30.85) (Foreign4,30.85)};
\end{axis}
\end{tikzpicture}

\begin{tikzpicture}[font=\small]
\sffamily
\begin{axis}[
  ybar,
  bar width=10pt,
  ylabel={\textcolor{white}{|}Baseline\textcolor{white}{|}},
  ylabel near ticks,
  width = 5in, height = 1.5in,
  ymin=0,
  ytick=\empty,
  xtick=data,
  tick label style={font=\scriptsize},
  axis x line=bottom,
  axis y line=left,
  enlarge x limits=0.1,
  symbolic x coords={Native1,Native2,Native3,Native4,\textcolor{white}{|},\textcolor{white}{^},Foreign1,Foreign2,Foreign3,Foreign4},
  xticklabel style={anchor=base,yshift=-\baselineskip},
  nodes near coords={\pgfmathprintnumber\pgfplotspointmeta},
  legend image code/.code={
        \draw [#1] (0cm,-0.1cm) rectangle (0.2cm,0.1cm); },
  legend style={
    font=\small,
    nodes={scale=1, transform shape},
    at={(0.5,-0.25)},
    anchor=north,
    draw=none,
    legend columns=-1},
]
  
  \addplot[fill=white,bar shift=0pt,color=col1,font=\tiny] coordinates {
(Native1,19.8)
(Foreign1,7.9)
(Native2,16.0)
(Foreign2,4.5)
(Native3,9.5)
(Foreign3,13.6)
(Native4,16.0)
(Foreign4,4.2)
  };
  \addplot[col2,sharp plot,densely dashed,font=\fontsize{0.0001}{0.0002}] coordinates
  {(Native1,12) (Foreign4,12)};
\end{axis}
\end{tikzpicture}

\captionof{figure}{\small {\bf Easy and hard questions:} We divide the questions into two types. For type native (the left), these questions only compare native comedians and for type foreign (the right), these questions only compare foreign comedians. We pick a group of respondents who are familiar with native stand-up comedy but not familiar with foreign stand-up comedy such that questions of type native are easier for them, compared to questions of type foreign. We compute both the Tvd-variety and baseline score of their feedback for all questions. The green lines show that Tvd-variety successfully separates easy and hard questions (easy questions are above the line and hard questions are below the line) while baseline does not. 
}
\label{talk:crossq}
\end{figure}


Our results validate the advantage of our metric, compared to the baseline metric. To check the robustness of the results, we also reduce the effect of the group size by sampling (without replacement) the same number of respondents from each group for each comparison. The results are consistent. We show the results in the appendix. We also perform additional comparisons between male respondents and female respondents for all case studies. We defer the results to the appendix. 

\section{Conclusion and discussion}

Our work focuses on measuring the informativeness of a group of people in subjective questions. By additionally asking for respondents' predictions about other people's choices, we provide a refined definition of uninformative feedback. For the new definition, we propose a new family of informativeness metric, $f$-variety, for a group of people's feedback. $f$-variety separates informative and uninformative feedback and decreases as the ratio of uninformative feedback increases. We validate our metric both theoretically and empirically. 


Our method provides only group-level measurements. A future direction is to separate experts and non-experts in the mixed group with additional assumptions. Another future direction to theoretically explore the effect of different convex functions used for $f$-variety and further define an optimization goal and optimize over the convex functions. 

Our experimental setting focuses on binary choices, while our theory is applicable in the non-binary case. However, in practice, the respondents require additional effort to provide a prediction over non-binary choices. In the future, we can design a more practical approach in the non-binary setting. For example, one potential solution is to ask respondents for a prediction for a single choice randomly and combine them afterward. Moreover, we use flat payment in our experiments. In the future, we can consider incentives and compare the subjective feedback collected by different payment schemes.


\bibliographystyle{plainnat}
\bibliography{reference}

\clearpage
\appendix
\section{More comparisons}

In this section, we show the comparison between the male group and female group in our case studies. For athletes comparisons, among all respondents, 306 are male and 350 are female. Figure \ref{ath:sex} shows the comparison results. For stand-up comedians comparisons, among all respondents, 262 are male and 370 are female. Figure \ref{talk:sex} shows the comparison results. 
\begin{figure}[!h]
\definecolor{col1}{HTML}{ff919e}
\definecolor{col2}{HTML}{00aea5}
\centering
\begin{tikzpicture}[font=\small]
\sffamily
\begin{axis}[
  ybar,
  bar width=10pt,
  ylabel={Tvd-variety},
  ylabel near ticks,
  width = 5in, height = 1.45in,
  ymin=0,
  ytick=\empty,
  xtick=data,
  tick label style={font=\scriptsize},
  axis x line=bottom,
  axis y line=left,
  enlarge x limits=0.1,
  symbolic x coords={Basketball(M),Soccer(M),Basketball(F),Snooker,Formula One,Volleyball(F),Ping-pong(M)},
  xticklabel style={anchor=base,yshift=-\baselineskip},
  nodes near coords={\pgfmathprintnumber\pgfplotspointmeta},
  legend image code/.code={
        \draw [#1] (0cm,-0.1cm) rectangle (0.2cm,0.1cm); },
  legend style={
    font=\small,
    nodes={scale=1, transform shape},
    at={(0.5,-0.25)},
    anchor=north,
    draw=none,
    legend columns=-1},
]
  \addplot[fill=white,bar shift=-6pt,color=col1,font=\tiny] coordinates {
(Basketball(M),32.7)
(Soccer(M),28.1)
(Basketball(F),26.8)
(Snooker,25.8)
(Formula One,20.6)
(Volleyball(F),29.4)
(Ping-pong(M),22.9)
  };
  \addlegendentry{Male}
  \addplot[fill=white,bar shift=6pt,color=col2,font=\tiny] coordinates {
(Basketball(M),23.4)
(Soccer(M),24.0)
(Basketball(F),26.6)
(Snooker,25.4)
(Formula One,19.1)
(Volleyball(F),26.9)
(Ping-pong(M),24.6)
  };
  \addlegendentry{Female}
\end{axis}
\end{tikzpicture}
\begin{tikzpicture}[font=\small]
\sffamily
\begin{axis}[
  ybar,
  bar width=10pt,
  ylabel={{\textcolor{white}{|}Baseline\textcolor{white}{|}}},
  ylabel near ticks,
  width = 5in, height = 1.45in,
  ymin=0,
  ytick=\empty,
  xtick=data,
  tick label style={font=\scriptsize},
  axis x line=bottom,
  axis y line=left,
  enlarge x limits=0.1,
  symbolic x coords={Basketball(M),Soccer(M),Basketball(F),Snooker,Formula One,Volleyball(F),Ping-pong(M)},
  xticklabel style={anchor=base,yshift=-\baselineskip},
  nodes near coords={\pgfmathprintnumber\pgfplotspointmeta},
  legend image code/.code={
        \draw [#1] (0cm,-0.1cm) rectangle (0.2cm,0.1cm); },
  legend style={
    font=\small,
    nodes={scale=1, transform shape},
    at={(0.5,-0.25)},
    anchor=north,
    draw=none,
    legend columns=-1},
]
  \addplot[fill=white,bar shift=-6pt,color=col1,font=\tiny] coordinates {
(Basketball(M),8.8)
(Soccer(M),2.9)
(Basketball(F),10.8)
(Snooker,2.9)
(Formula One,17.0)
(Volleyball(F),16.7)
(Ping-pong(M),5.2)
  };
  
  \addplot[fill=white,bar shift=6pt,color=col2,font=\tiny] coordinates {
(Basketball(M),2.3)
(Soccer(M),6.3)
(Basketball(F),14.6)
(Snooker,4.0)
(Formula One,3.1)
(Volleyball(F),14.0)
(Ping-pong(M),6.6)
  };
  
\end{axis}
\end{tikzpicture}
\captionof{figure}{\small {\bf Athletes study: male {\sl v.s.} female:} We also compare the male group and female group of respondents in the study for comparisons between athletes. }
\label{ath:sex}
\end{figure}

\begin{figure}[!h]
\definecolor{col1}{HTML}{ff919e}
\definecolor{col2}{HTML}{00aea5}
\centering
\begin{tikzpicture}[font=\small]
\sffamily
\begin{axis}[
  ybar,
  bar width=10pt,
  ylabel={Tvd-variety},
  ylabel near ticks,
  width = 5in, height = 1.45in,
  ymin=0,
  ytick=\empty,
  xtick=data,
  tick label style={font=\scriptsize},
  axis x line=bottom,
  axis y line=left,
  enlarge x limits=0.1,
  symbolic x coords={Native1,Native2,Native3,Native4,Foreign1,Foreign2,Foreign3,Foreign4},
  xticklabel style={anchor=base,yshift=-\baselineskip},
  nodes near coords={\pgfmathprintnumber\pgfplotspointmeta},
  legend image code/.code={
        \draw [#1] (0cm,-0.1cm) rectangle (0.2cm,0.1cm); },
  legend style={
    font=\small,
    nodes={scale=1, transform shape},
    at={(0.5,-0.25)},
    anchor=north,
    draw=none,
    legend columns=-1},
]
  \addplot[fill=white,bar shift=-6pt,color=col1,font=\tiny] coordinates {
(Native1,40.1)
(Foreign1,31.3)
(Native2,30.5)
(Foreign2,28.6)
(Native3,32.8)
(Foreign3,30.9)
(Native4,35.5)
(Foreign4,32.1)
  };
  \addlegendentry{Male}
  \addplot[fill=white,bar shift=6pt,color=col2,font=\tiny] coordinates {
(Native1,36.5)
(Foreign1,27.6)
(Native2,27.3)
(Foreign2,20.3)
(Native3,31.9)
(Foreign3,27.3)
(Native4,29.5)
(Foreign4,25.7)
  };
  \addlegendentry{Female}
\end{axis}
\end{tikzpicture}
\begin{tikzpicture}[font=\small]
\sffamily
\begin{axis}[
  ybar,
  bar width=10pt,
  ylabel={\textcolor{white}{|}Baseline\textcolor{white}{|}},
  ylabel near ticks,
  width = 5in, height = 1.45in,
  ymin=0,
  ytick=\empty,
  xtick=data,
  tick label style={font=\scriptsize},
  axis x line=bottom,
  axis y line=left,
  enlarge x limits=0.1,
  symbolic x coords={Native1,Native2,Native3,Native4,Foreign1,Foreign2,Foreign3,Foreign4},
  xticklabel style={anchor=base,yshift=-\baselineskip},
  nodes near coords={\pgfmathprintnumber\pgfplotspointmeta},
  legend image code/.code={
        \draw [#1] (0cm,-0.1cm) rectangle (0.2cm,0.1cm); },
  legend style={
    font=\small,
    nodes={scale=1, transform shape},
    at={(0.5,-0.25)},
    anchor=north,
    draw=none,
    legend columns=-1},
]
  \addplot[fill=white,bar shift=-6pt,color=col1,font=\tiny] coordinates {
(Native1,13.7)
(Foreign1,0.4)
(Native2,14.5)
(Foreign2,9.2)
(Native3,3.8)
(Foreign3,9.5)
(Native4,15.3)
(Foreign4,3.4)
  };
  \addplot[fill=white,bar shift=6pt,color=col2,font=\tiny] coordinates {
(Native1,13.0)
(Foreign1,1.4)
(Native2,16.2)
(Foreign2,0.5)
(Native3,16.5)
(Foreign3,18.6)
(Native4,17.6)
(Foreign4,1.6)
  };
  
\end{axis}
\end{tikzpicture}
\captionof{figure}{\small {\bf Comedians study: male {\sl v.s.} female:} We also compare the male group and female group of respondents in the study for comparisons between stand-up comedians. }
\label{talk:sex}
\end{figure}

\clearpage
\section{Pearson-variety \& Hellinger-variety for numerical experiments}

In this section, we will additionally evaluate two special $f$-variety, Pearson-variety and Hellinger-variety by numerical experiments. Pearson-variety uses Pearson-divergence. Pearson-divergence $D_{pearson}(\mathbf{p},\mathbf{q})=\sum_{\sigma\in\Sigma}\frac{\left(p(\sigma)-q(\sigma)\right)^2}{q(\sigma)}$. Hellinger-variety uses squared Hellinger distance. Squared Hellinger distance $D_{hellinger}(\mathbf{p},\mathbf{q})=\frac{1}{2}\sum_{\sigma\in\Sigma}\left(\sqrt{p(\sigma)}-\sqrt{q(\sigma)}\right)^2$.

\begin{figure}[!h]
    \begin{minipage}[b]{0.48\textwidth}
        \centering
        \begin{tikzpicture}[scale=.5]
\sffamily
\begin{axis}[
title style={align=center,yshift=-.0in},
legend style={font=\large,
 nodes={scale=1, transform shape},
 at={(.7,.9)},
 anchor=north west,
 draw=none},
legend cell align={left},
width = 4.5in, height = 3in,
ylabel near ticks,
ytick={0, .1, .2, .3, .4, .5, .6, .7, .8},
ylabel = {Pearson-variety},
xlabel near ticks,
every tick label/.append style={font=\normalsize},
xmin=-0.05,xmax=1.05,ymin=0,ymax=.8,
xtick={0, .1, .2, .3, .4, .5, .6, .7, .8, .9, 1},
xlabel={ratio of uninformative participants},
xlabel style = {font=\Large},
ylabel style = {font=\Large},
yticklabel style={
  /pgf/number format/fixed,
  /pgf/number format/precision=5
},
]

\addplot[solid, mark=., mark options={scale=.8}, gray, style=thick,]
table[]{
0 0.248505
0.01 0.24347
0.02 0.238545
0.03 0.233713
0.04 0.228965
0.05 0.224292
0.06 0.21969
0.07 0.215155
0.08 0.210683
0.09 0.20627
0.1 0.201916
0.11 0.197616
0.12 0.193371
0.13 0.189179
0.14 0.185037
0.15 0.180945
0.16 0.176903
0.17 0.172909
0.18 0.168962
0.19 0.165063
0.2 0.161209
0.21 0.157401
0.22 0.153639
0.23 0.149922
0.24 0.146249
0.25 0.14262
0.26 0.139036
0.27 0.135496
0.28 0.131999
0.29 0.128545
0.3 0.125135
0.31 0.121769
0.32 0.118445
0.33 0.115164
0.34 0.111927
0.35 0.108733
0.36 0.105581
0.37 0.102473
0.38 0.0994078
0.39 0.0963856
0.4 0.0934065
0.41 0.0904706
0.42 0.087578
0.43 0.0847287
0.44 0.0819228
0.45 0.0791604
0.46 0.0764417
0.47 0.0737667
0.48 0.0711356
0.49 0.0685484
0.5 0.0660054
0.51 0.0635068
0.52 0.0610525
0.53 0.0586429
0.54 0.056278
0.55 0.0539582
0.56 0.0516834
0.57 0.049454
0.58 0.0472701
0.59 0.0451319
0.6 0.0430397
0.61 0.0409936
0.62 0.0389939
0.63 0.0370407
0.64 0.0351344
0.65 0.0332751
0.66 0.0314631
0.67 0.0296986
0.68 0.0279819
0.69 0.0263132
0.7 0.0246928
0.71 0.023121
0.72 0.021598
0.73 0.020124
0.74 0.0186994
0.75 0.0173245
0.76 0.0159994
0.77 0.0147246
0.78 0.0135004
0.79 0.0123269
0.8 0.0112045
0.81 0.0101336
0.82 0.00911439
0.83 0.00814725
0.84 0.00723249
0.85 0.00637043
0.86 0.00556141
0.87 0.00480576
0.88 0.00410381
0.89 0.00345593
0.9 0.00286244
0.91 0.00232372
0.92 0.00184011
0.93 0.00141198
0.94 0.0010397
0.95 0.000723642
0.96 0.000464178
0.97 0.000261693
0.98 0.000116573
0.99 0.0000292099
1. 0
};

\addlegendentry{theoretical}

\addplot[solid, mark=o, mark options={scale=.8}, red]
plot [error bars/.cd, y dir = both, y explicit]
table[y error index=2]{
0.0030 0.2519 0.0224
0.1030 0.2082 0.0211
0.2030 0.1675 0.0197
0.3030 0.1325 0.0186
0.4030 0.1014 0.0165
0.5030 0.0744 0.0141
0.6030 0.0527 0.0131
0.7030 0.0349 0.0103
0.8030 0.0223 0.0079
0.9030 0.0141 0.0058
1.0030 0.0109 0.0046
};
\addlegendentry{$n=1000$}

\addplot[solid, mark=o, mark options={scale=.8}, orange]
plot [error bars/.cd, y dir = both, y explicit]
table[y error index=2]{
0.0010 0.2583 0.0313
0.1010 0.2117 0.0298
0.2010 0.1764 0.0278
0.3010 0.1412 0.0271
0.4010 0.1119 0.0252
0.5010 0.0851 0.0213
0.6010 0.0629 0.0181
0.7010 0.0448 0.0144
0.8010 0.0319 0.0123
0.9010 0.0242 0.0099
1.0010 0.0224 0.0091
};
\addlegendentry{$n=500$}

\addplot[solid, mark=o, mark options={scale=.8}, green]
plot [error bars/.cd, y dir = both, y explicit]
table[y error index=2]{
-0.0010 0.2745 0.0491
0.0990 0.2351 0.0459
0.1990 0.1966 0.0459
0.2990 0.1674 0.0448
0.3990 0.1365 0.0372
0.4990 0.1097 0.0353
0.5990 0.0921 0.0327
0.6990 0.0759 0.0290
0.7990 0.0628 0.0240
0.8990 0.0550 0.0231
0.9990 0.0519 0.0209
};
\addlegendentry{$n=200$}

\addplot[solid, mark=o, mark options={scale=.8}, cyan]
plot [error bars/.cd, y dir = both, y explicit]
table[y error index=2]{
-0.0030 0.3060 0.0687
0.0970 0.2644 0.0699
0.1970 0.2339 0.0667
0.2970 0.2040 0.0644
0.3970 0.1759 0.0567
0.4970 0.1535 0.0548
0.5970 0.1376 0.0516
0.6970 0.1190 0.0463
0.7970 0.1080 0.0413
0.8970 0.1036 0.0436
0.9970 0.1014 0.0414
};
\addlegendentry{$n=100$}

\end{axis}
\end{tikzpicture}
        \captionof*{figure}{\small (a) {\bf Uniform-1}}
    \end{minipage}
    \hfill
    \begin{minipage}[b]{0.48\textwidth}
        \centering
        \begin{tikzpicture}[scale=.5]
\sffamily
\begin{axis}[
title style={align=center,yshift=-.0in},
legend style={font=\large,
 nodes={scale=1, transform shape},
 at={(.7,.9)},
 anchor=north west,
 draw=none},
legend cell align={left},
width = 4.5in, height = 3in,
ylabel near ticks,
ytick={0, .1, .2, .3, .4, .5, .6, .7, .8},
ylabel = {Pearson-variety},
xlabel near ticks,
every tick label/.append style={font=\normalsize},
xmin=-0.05,xmax=1.05,ymin=0,ymax=.8,
xtick={0, .1, .2, .3, .4, .5, .6, .7, .8, .9, 1},
xlabel={ratio of uninformative participants},
xlabel style = {font=\Large},
ylabel style = {font=\Large},
yticklabel style={
  /pgf/number format/fixed,
  /pgf/number format/precision=5
},
]

\addplot[solid, mark=., mark options={scale=.8}, gray, style=thick,]
table[]{
0 0.572854
0.01 0.561116
0.02 0.549555
0.03 0.538155
0.04 0.526908
0.05 0.515808
0.06 0.504851
0.07 0.494033
0.08 0.483352
0.09 0.472805
0.1 0.462391
0.11 0.452106
0.12 0.441949
0.13 0.43192
0.14 0.422017
0.15 0.412238
0.16 0.402582
0.17 0.39305
0.18 0.383638
0.19 0.374348
0.2 0.365177
0.21 0.356126
0.22 0.347194
0.23 0.338379
0.24 0.329682
0.25 0.321102
0.26 0.312639
0.27 0.304292
0.28 0.29606
0.29 0.287944
0.3 0.279943
0.31 0.272056
0.32 0.264283
0.33 0.256625
0.34 0.24908
0.35 0.241649
0.36 0.234331
0.37 0.227126
0.38 0.220035
0.39 0.213055
0.4 0.206189
0.41 0.199435
0.42 0.192793
0.43 0.186263
0.44 0.179845
0.45 0.17354
0.46 0.167346
0.47 0.161264
0.48 0.155293
0.49 0.149435
0.5 0.143688
0.51 0.138053
0.52 0.132529
0.53 0.127117
0.54 0.121817
0.55 0.116628
0.56 0.11155
0.57 0.106585
0.58 0.101731
0.59 0.0969887
0.6 0.0923582
0.61 0.0878395
0.62 0.0834326
0.63 0.0791377
0.64 0.0749548
0.65 0.0708839
0.66 0.0669252
0.67 0.0630786
0.68 0.0593445
0.69 0.0557227
0.7 0.0522135
0.71 0.0488169
0.72 0.0455331
0.73 0.0423622
0.74 0.0393043
0.75 0.0363595
0.76 0.033528
0.77 0.03081
0.78 0.0282055
0.79 0.0257147
0.8 0.0233379
0.81 0.0210751
0.82 0.0189265
0.83 0.0168922
0.84 0.0149726
0.85 0.0131677
0.86 0.0114777
0.87 0.00990288
0.88 0.00844334
0.89 0.00709932
0.9 0.00587102
0.91 0.00475864
0.92 0.00376239
0.93 0.0028825
0.94 0.00211917
0.95 0.00147264
0.96 0.000943134
0.97 0.000530877
0.98 0.000236108
0.99 0.0000590682
1. 0
};

\addlegendentry{theoretical}

\addplot[solid, mark=o, mark options={scale=.8}, red]
plot [error bars/.cd, y dir = both, y explicit]
table[y error index=2]{
0.0030 0.5670 0.0244
0.1030 0.4601 0.0259
0.2030 0.3648 0.0257
0.3030 0.2831 0.0241
0.4030 0.2111 0.0225
0.5030 0.1508 0.0197
0.6030 0.0998 0.0169
0.7030 0.0612 0.0132
0.8030 0.0333 0.0097
0.9030 0.0167 0.0067
1.0030 0.0110 0.0047
};
\addlegendentry{$n=1000$}

\addplot[solid, mark=o, mark options={scale=.8}, orange]
plot [error bars/.cd, y dir = both, y explicit]
table[y error index=2]{
0.0010 0.5678 0.0363
0.1010 0.4656 0.0364
0.2010 0.3711 0.0379
0.3010 0.2915 0.0347
0.4010 0.2200 0.0311
0.5010 0.1598 0.0281
0.6010 0.1114 0.0261
0.7010 0.0720 0.0200
0.8010 0.0442 0.0151
0.9010 0.0276 0.0112
1.0010 0.0221 0.0090
};
\addlegendentry{$n=500$}

\addplot[solid, mark=o, mark options={scale=.8}, green]
plot [error bars/.cd, y dir = both, y explicit]
table[y error index=2]{
-0.0010 0.5792 0.0558
0.0990 0.4763 0.0580
0.1990 0.3890 0.0571
0.2990 0.3060 0.0537
0.3990 0.2448 0.0510
0.4990 0.1844 0.0461
0.5990 0.1375 0.0400
0.6990 0.0990 0.0334
0.7990 0.0736 0.0284
0.8990 0.0578 0.0242
0.9990 0.0527 0.0217
};
\addlegendentry{$n=200$}

\addplot[solid, mark=o, mark options={scale=.8}, cyan]
plot [error bars/.cd, y dir = both, y explicit]
table[y error index=2]{
-0.0030 0.5944 0.0761
0.0970 0.4964 0.0752
0.1970 0.4177 0.0792
0.2970 0.3423 0.0774
0.3970 0.2816 0.0750
0.4970 0.2191 0.0680
0.5970 0.1788 0.0584
0.6970 0.1409 0.0504
0.7970 0.1192 0.0481
0.8970 0.1053 0.0431
0.9970 0.1011 0.0432
};
\addlegendentry{$n=100$}

\end{axis}
\end{tikzpicture}
        \captionof*{figure}{\small (b) {\bf Non-Uniform-1}}
    \end{minipage}
    \vfill
    \begin{minipage}[b]{0.48\textwidth}
        \centering
        \begin{tikzpicture}[scale=.5]
\sffamily
\begin{axis}[
title style={align=center,yshift=-.0in},
legend style={font=\large,
 nodes={scale=1, transform shape},
 at={(.7,.9)},
 anchor=north west,
 draw=none},
legend cell align={left},
width = 4.5in, height = 3in,
ylabel near ticks,
ytick={0, .1, .2, .3, .4, .5, .6, .7, .8},
ylabel = {Pearson-variety},
xlabel near ticks,
every tick label/.append style={font=\normalsize},
xmin=-0.05,xmax=1.05,ymin=0,ymax=.8,
xtick={0, .1, .2, .3, .4, .5, .6, .7, .8, .9, 1},
xlabel={ratio of uninformative participants},
xlabel style = {font=\Large},
ylabel style = {font=\Large},
yticklabel style={
  /pgf/number format/fixed,
  /pgf/number format/precision=5
},
]

\addplot[solid, mark=., mark options={scale=.8}, gray, style=thick,]
table[]{
0 0.145076
0.01 0.141922
0.02 0.138859
0.03 0.135867
0.04 0.132936
0.05 0.130059
0.06 0.127231
0.07 0.124449
0.08 0.121709
0.09 0.11901
0.1 0.11635
0.11 0.113728
0.12 0.111141
0.13 0.108591
0.14 0.106075
0.15 0.103593
0.16 0.101144
0.17 0.0987278
0.18 0.0963441
0.19 0.0939924
0.2 0.0916723
0.21 0.0893835
0.22 0.0871257
0.23 0.0848986
0.24 0.0827021
0.25 0.0805358
0.26 0.0783998
0.27 0.0762936
0.28 0.0742173
0.29 0.0721707
0.3 0.0701536
0.31 0.0681659
0.32 0.0662076
0.33 0.0642785
0.34 0.0623785
0.35 0.0605076
0.36 0.0586657
0.37 0.0568526
0.38 0.0550684
0.39 0.053313
0.4 0.0515864
0.41 0.0498884
0.42 0.048219
0.43 0.0465782
0.44 0.044966
0.45 0.0433822
0.46 0.041827
0.47 0.0403002
0.48 0.0388018
0.49 0.0373318
0.5 0.0358902
0.51 0.034477
0.52 0.033092
0.53 0.0317354
0.54 0.0304071
0.55 0.0291071
0.56 0.0278354
0.57 0.026592
0.58 0.0253768
0.59 0.0241899
0.6 0.0230312
0.61 0.0219008
0.62 0.0207987
0.63 0.0197248
0.64 0.0186792
0.65 0.0176619
0.66 0.0166728
0.67 0.0157121
0.68 0.0147796
0.69 0.0138754
0.7 0.0129996
0.71 0.012152
0.72 0.0113328
0.73 0.010542
0.74 0.00977952
0.75 0.00904545
0.76 0.00833978
0.77 0.00766255
0.78 0.00701378
0.79 0.00639348
0.8 0.00580168
0.81 0.00523841
0.82 0.00470369
0.83 0.00419755
0.84 0.00372003
0.85 0.00327114
0.86 0.00285094
0.87 0.00245944
0.88 0.00209668
0.89 0.00176271
0.9 0.00145755
0.91 0.00118124
0.92 0.000933829
0.93 0.000715354
0.94 0.000525857
0.95 0.000365384
0.96 0.000233979
0.97 0.000131689
0.98 0.0000585629
0.99 0.0000146494
1. 0
};

\addlegendentry{theoretical}

\addplot[solid, mark=o, mark options={scale=.8}, red]
plot [error bars/.cd, y dir = both, y explicit]
table[y error index=2]{
0.0030 0.1468 0.0184
0.1030 0.1205 0.0178
0.2030 0.0980 0.0163
0.3030 0.0773 0.0148
0.4030 0.0596 0.0134
0.5030 0.0447 0.0116
0.6030 0.0324 0.0096
0.7030 0.0230 0.0078
0.8030 0.0168 0.0064
0.9030 0.0124 0.0052
1.0030 0.0109 0.0046
};
\addlegendentry{$n=1000$}

\addplot[solid, mark=o, mark options={scale=.8}, orange]
plot [error bars/.cd, y dir = both, y explicit]
table[y error index=2]{
0.0010 0.1529 0.0257
0.1010 0.1275 0.0247
0.2010 0.1059 0.0227
0.3010 0.0868 0.0209
0.4010 0.0678 0.0188
0.5010 0.0549 0.0174
0.6010 0.0429 0.0146
0.7010 0.0341 0.0129
0.8010 0.0272 0.0106
0.9010 0.0225 0.0093
1.0010 0.0214 0.0088
};
\addlegendentry{$n=500$}

\addplot[solid, mark=o, mark options={scale=.8}, green]
plot [error bars/.cd, y dir = both, y explicit]
table[y error index=2]{
-0.0010 0.1735 0.0427
0.0990 0.1508 0.0414
0.1990 0.1311 0.0388
0.2990 0.1129 0.0366
0.3990 0.0966 0.0320
0.4990 0.0825 0.0299
0.5990 0.0724 0.0283
0.6990 0.0626 0.0258
0.7990 0.0573 0.0236
0.8990 0.0542 0.0229
0.9990 0.0519 0.0207
};
\addlegendentry{$n=200$}

\addplot[solid, mark=o, mark options={scale=.8}, cyan]
plot [error bars/.cd, y dir = both, y explicit]
table[y error index=2]{
-0.0030 0.2071 0.0620
0.0970 0.1840 0.0626
0.1970 0.1670 0.0580
0.2970 0.1502 0.0535
0.3970 0.1369 0.0519
0.4970 0.1276 0.0502
0.5970 0.1166 0.0459
0.6970 0.1107 0.0441
0.7970 0.1056 0.0436
0.8970 0.1005 0.0397
0.9970 0.0992 0.0397
};
\addlegendentry{$n=100$}

\end{axis}
\end{tikzpicture}
        \captionof*{figure}{\small (c) {\bf Uniform-2}}
    \end{minipage}
    \hfill
    \begin{minipage}[b]{0.48\textwidth}
        \centering
        \begin{tikzpicture}[scale=.5]
\sffamily
\begin{axis}[
title style={align=center,yshift=-.0in},
legend style={font=\large,
 nodes={scale=1, transform shape},
 at={(.7,.9)},
 anchor=north west,
 draw=none},
legend cell align={left},
width = 4.5in, height = 3in,
ylabel near ticks,
ytick={0, .1, .2, .3, .4, .5, .6, .7, .8},
ylabel = {Pearson-variety},
xlabel near ticks,
every tick label/.append style={font=\normalsize},
xmin=-0.05,xmax=1.05,ymin=0,ymax=.8,
xtick={0, .1, .2, .3, .4, .5, .6, .7, .8, .9, 1},
xlabel={ratio of uninformative participants},
xlabel style = {font=\Large},
ylabel style = {font=\Large},
yticklabel style={
  /pgf/number format/fixed,
  /pgf/number format/precision=5
},
]

\addplot[solid, mark=., mark options={scale=.8}, gray, style=thick,]
table[]{
0 0.248505
0.01 0.24347
0.02 0.238545
0.03 0.233713
0.04 0.228965
0.05 0.224292
0.06 0.21969
0.07 0.215155
0.08 0.210683
0.09 0.20627
0.1 0.201916
0.11 0.197616
0.12 0.193371
0.13 0.189179
0.14 0.185037
0.15 0.180945
0.16 0.176903
0.17 0.172909
0.18 0.168962
0.19 0.165063
0.2 0.161209
0.21 0.157401
0.22 0.153639
0.23 0.149922
0.24 0.146249
0.25 0.14262
0.26 0.139036
0.27 0.135496
0.28 0.131999
0.29 0.128545
0.3 0.125135
0.31 0.121769
0.32 0.118445
0.33 0.115164
0.34 0.111927
0.35 0.108733
0.36 0.105581
0.37 0.102473
0.38 0.0994078
0.39 0.0963856
0.4 0.0934065
0.41 0.0904706
0.42 0.087578
0.43 0.0847287
0.44 0.0819228
0.45 0.0791604
0.46 0.0764417
0.47 0.0737667
0.48 0.0711356
0.49 0.0685484
0.5 0.0660054
0.51 0.0635068
0.52 0.0610525
0.53 0.0586429
0.54 0.056278
0.55 0.0539582
0.56 0.0516834
0.57 0.049454
0.58 0.0472701
0.59 0.0451319
0.6 0.0430397
0.61 0.0409936
0.62 0.0389939
0.63 0.0370407
0.64 0.0351344
0.65 0.0332751
0.66 0.0314631
0.67 0.0296986
0.68 0.0279819
0.69 0.0263132
0.7 0.0246928
0.71 0.023121
0.72 0.021598
0.73 0.020124
0.74 0.0186994
0.75 0.0173245
0.76 0.0159994
0.77 0.0147246
0.78 0.0135004
0.79 0.0123269
0.8 0.0112045
0.81 0.0101336
0.82 0.00911439
0.83 0.00814725
0.84 0.00723249
0.85 0.00637043
0.86 0.00556141
0.87 0.00480576
0.88 0.00410381
0.89 0.00345593
0.9 0.00286244
0.91 0.00232372
0.92 0.00184011
0.93 0.00141198
0.94 0.0010397
0.95 0.000723642
0.96 0.000464178
0.97 0.000261693
0.98 0.000116573
0.99 0.0000292099
1. 0
};

\addlegendentry{theoretical}

\addplot[solid, mark=o, mark options={scale=.8}, red]
plot [error bars/.cd, y dir = both, y explicit]
table[y error index=2]{
0.0030 0.2519 0.0224
0.1030 0.2082 0.0211
0.2030 0.1675 0.0197
0.3030 0.1325 0.0186
0.4030 0.1014 0.0165
0.5030 0.0744 0.0141
0.6030 0.0527 0.0131
0.7030 0.0349 0.0103
0.8030 0.0223 0.0079
0.9030 0.0141 0.0058
1.0030 0.0109 0.0046
};
\addlegendentry{$n=1000$}

\addplot[solid, mark=o, mark options={scale=.8}, orange]
plot [error bars/.cd, y dir = both, y explicit]
table[y error index=2]{
0.0010 0.2583 0.0313
0.1010 0.2117 0.0298
0.2010 0.1764 0.0278
0.3010 0.1412 0.0271
0.4010 0.1119 0.0252
0.5010 0.0851 0.0213
0.6010 0.0629 0.0181
0.7010 0.0448 0.0144
0.8010 0.0319 0.0123
0.9010 0.0242 0.0099
1.0010 0.0224 0.0091
};
\addlegendentry{$n=500$}

\addplot[solid, mark=o, mark options={scale=.8}, green]
plot [error bars/.cd, y dir = both, y explicit]
table[y error index=2]{
-0.0010 0.2745 0.0491
0.0990 0.2351 0.0459
0.1990 0.1966 0.0459
0.2990 0.1674 0.0448
0.3990 0.1365 0.0372
0.4990 0.1097 0.0353
0.5990 0.0921 0.0327
0.6990 0.0759 0.0290
0.7990 0.0628 0.0240
0.8990 0.0550 0.0231
0.9990 0.0519 0.0209
};
\addlegendentry{$n=200$}

\addplot[solid, mark=o, mark options={scale=.8}, cyan]
plot [error bars/.cd, y dir = both, y explicit]
table[y error index=2]{
-0.0030 0.3060 0.0687
0.0970 0.2644 0.0699
0.1970 0.2339 0.0667
0.2970 0.2040 0.0644
0.3970 0.1759 0.0567
0.4970 0.1535 0.0548
0.5970 0.1376 0.0516
0.6970 0.1190 0.0463
0.7970 0.1080 0.0413
0.8970 0.1036 0.0436
0.9970 0.1014 0.0414
};
\addlegendentry{$n=100$}

\end{axis}
\end{tikzpicture}
        \captionof*{figure}{\small (d) {\bf Non-Uniform-2}}
    \end{minipage}
    
    \captionof{figure}{\small {\bf Pearson-variety {\sl v.s.} \bf ratio of non-experts:} We adopt the same setting as Figure~\ref{fig:numfig} and observe the similar results for Pearson-variety.}
    \label{fig:numfig-pearson}
\end{figure}

\begin{figure}[!h]
    \begin{minipage}[b]{0.48\textwidth}
        \centering
        \begin{tikzpicture}[scale=.5]
\sffamily
\begin{axis}[
title style={align=center,yshift=-.0in},
legend style={font=\large,
 nodes={scale=1, transform shape},
 at={(.7,.9)},
 anchor=north west,
 draw=none},
legend cell align={left},
width = 4.5in, height = 3in,
ylabel near ticks,
ytick={0, .02, .04, .06, .08, .10, .12, .14},
ylabel = {Hellinger-variety},
xlabel near ticks,
every tick label/.append style={font=\normalsize},
xmin=-0.05,xmax=1.05,ymin=0,ymax=.14,
xtick={0, .1, .2, .3, .4, .5, .6, .7, .8, .9, 1},
xlabel={ratio of uninformative participants},
xlabel style = {font=\Large},
ylabel style = {font=\Large},
yticklabel style={
  /pgf/number format/fixed,
  /pgf/number format/precision=5
},
]

\addplot[solid, mark=., mark options={scale=.8}, gray, style=thick,]
table[]{
0 0.100891
0.01 0.0951316
0.02 0.090993
0.03 0.0874232
0.04 0.0842105
0.05 0.0812563
0.06 0.0785041
0.07 0.0759169
0.08 0.0734687
0.09 0.0711405
0.1 0.0689176
0.11 0.0667886
0.12 0.0647439
0.13 0.062776
0.14 0.0608783
0.15 0.0590454
0.16 0.0572725
0.17 0.0555557
0.18 0.0538913
0.19 0.0522762
0.2 0.0507075
0.21 0.0491829
0.22 0.0477001
0.23 0.0462569
0.24 0.0448518
0.25 0.0434829
0.26 0.0421487
0.27 0.0408479
0.28 0.0395792
0.29 0.0383415
0.3 0.0371335
0.31 0.0359544
0.32 0.0348031
0.33 0.0336789
0.34 0.0325809
0.35 0.0315084
0.36 0.0304606
0.37 0.0294369
0.38 0.0284367
0.39 0.0274593
0.4 0.0265043
0.41 0.0255711
0.42 0.0246593
0.43 0.0237683
0.44 0.0228978
0.45 0.0220474
0.46 0.0212165
0.47 0.020405
0.48 0.0196124
0.49 0.0188384
0.5 0.0180828
0.51 0.0173451
0.52 0.0166252
0.53 0.0159227
0.54 0.0152374
0.55 0.0145691
0.56 0.0139176
0.57 0.0132826
0.58 0.0126639
0.59 0.0120614
0.6 0.0114748
0.61 0.010904
0.62 0.0103489
0.63 0.00980918
0.64 0.00928482
0.65 0.00877565
0.66 0.00828154
0.67 0.00780238
0.68 0.00733804
0.69 0.00688843
0.7 0.00645344
0.71 0.00603299
0.72 0.005627
0.73 0.00523538
0.74 0.00485806
0.75 0.00449499
0.76 0.00414611
0.77 0.00381136
0.78 0.00349071
0.79 0.00318412
0.8 0.00289154
0.81 0.00261298
0.82 0.00234839
0.83 0.00209778
0.84 0.00186115
0.85 0.00163848
0.86 0.0014298
0.87 0.00123511
0.88 0.00105445
0.89 0.000887848
0.9 0.000735337
0.91 0.000596966
0.92 0.000472792
0.93 0.000362877
0.94 0.000267292
0.95 0.000186121
0.96 0.000119453
0.97 0.00006739
0.98 0.0000300429
0.99 0.00000753471
1. 0
};

\addlegendentry{theoretical}

\addplot[solid, mark=o, mark options={scale=.8}, red]
plot [error bars/.cd, y dir = both, y explicit]
table[y error index=2]{
0.0030 0.1023 0.0079
0.1030 0.0696 0.0060
0.2030 0.0517 0.0050
0.3030 0.0382 0.0041
0.4030 0.0281 0.0036
0.5030 0.0195 0.0029
0.6030 0.0128 0.0024
0.7030 0.0077 0.0018
0.8030 0.0043 0.0013
0.9030 0.0022 0.0009
1.0030 0.0014 0.0006
};
\addlegendentry{$n=1000$}

\addplot[solid, mark=o, mark options={scale=.8}, orange]
plot [error bars/.cd, y dir = both, y explicit]
table[y error index=2]{
0.0010 0.1061 0.0109
0.1010 0.0734 0.0091
0.2010 0.0542 0.0074
0.3010 0.0401 0.0062
0.4010 0.0298 0.0052
0.5010 0.0212 0.0042
0.6010 0.0145 0.0035
0.7010 0.0097 0.0027
0.8010 0.0061 0.0021
0.9010 0.0040 0.0017
1.0010 0.0032 0.0014
};
\addlegendentry{$n=500$}

\addplot[solid, mark=o, mark options={scale=.8}, green]
plot [error bars/.cd, y dir = both, y explicit]
table[y error index=2]{
-0.0010 0.1138 0.0183
0.0990 0.0830 0.0156
0.1990 0.0614 0.0137
0.2990 0.0476 0.0115
0.3990 0.0359 0.0099
0.4990 0.0268 0.0081
0.5990 0.0196 0.0064
0.6990 0.0147 0.0055
0.7990 0.0110 0.0046
0.8990 0.0087 0.0037
0.9990 0.0082 0.0036
};
\addlegendentry{$n=200$}

\addplot[solid, mark=o, mark options={scale=.8}, cyan]
plot [error bars/.cd, y dir = both, y explicit]
table[y error index=2]{
-0.0030 0.1251 0.0252
0.0970 0.0986 0.0248
0.1970 0.0758 0.0220
0.2970 0.0606 0.0186
0.3970 0.0467 0.0170
0.4970 0.0372 0.0146
0.5970 0.0301 0.0125
0.6970 0.0243 0.0107
0.7970 0.0200 0.0092
0.8970 0.0175 0.0088
0.9970 0.0169 0.0083
};
\addlegendentry{$n=100$}

\end{axis}
\end{tikzpicture}
        \captionof*{figure}{\small (a) {\bf Uniform-1}}
    \end{minipage}
    \hfill
    \begin{minipage}[b]{0.48\textwidth}
        \centering
        \begin{tikzpicture}[scale=.5]
\sffamily
\begin{axis}[
title style={align=center,yshift=-.0in},
legend style={font=\large,
 nodes={scale=1, transform shape},
 at={(.7,.9)},
 anchor=north west,
 draw=none},
legend cell align={left},
width = 4.5in, height = 3in,
ylabel near ticks,
ytick={0, .02, .04, .06, .08, .10, .12, .14},
ylabel = {Hellinger-variety},
xlabel near ticks,
every tick label/.append style={font=\normalsize},
xmin=-0.05,xmax=1.05,ymin=0,ymax=.14,
xtick={0, .1, .2, .3, .4, .5, .6, .7, .8, .9, 1},
xlabel={ratio of uninformative participants},
xlabel style = {font=\Large},
ylabel style = {font=\Large},
yticklabel style={
  /pgf/number format/fixed,
  /pgf/number format/precision=5
},
]

\addplot[solid, mark=., mark options={scale=.8}, gray, style=thick,]
table[]{
0 0.116269
0.01 0.108875
0.02 0.103724
0.03 0.0993372
0.04 0.0954228
0.05 0.0918481
0.06 0.0885365
0.07 0.0854385
0.08 0.0825199
0.09 0.0797552
0.1 0.0771251
0.11 0.0746144
0.12 0.0722107
0.13 0.069904
0.14 0.0676858
0.15 0.0655491
0.16 0.0634875
0.17 0.061496
0.18 0.0595697
0.19 0.0577047
0.2 0.0558973
0.21 0.0541444
0.22 0.052443
0.23 0.0507905
0.24 0.0491846
0.25 0.0476232
0.26 0.0461042
0.27 0.044626
0.28 0.0431868
0.29 0.0417852
0.3 0.0404197
0.31 0.0390891
0.32 0.0377922
0.33 0.0365279
0.34 0.0352951
0.35 0.0340928
0.36 0.0329203
0.37 0.0317765
0.38 0.0306607
0.39 0.0295722
0.4 0.0285102
0.41 0.0274742
0.42 0.0264634
0.43 0.0254773
0.44 0.0245153
0.45 0.0235769
0.46 0.0226617
0.47 0.021769
0.48 0.0208986
0.49 0.0200498
0.5 0.0192225
0.51 0.0184161
0.52 0.0176303
0.53 0.0168648
0.54 0.0161191
0.55 0.0153931
0.56 0.0146865
0.57 0.0139988
0.58 0.0133299
0.59 0.0126795
0.6 0.0120474
0.61 0.0114332
0.62 0.0108369
0.63 0.0102582
0.64 0.00969681
0.65 0.00915265
0.66 0.00862549
0.67 0.00811517
0.68 0.00762152
0.69 0.00714438
0.7 0.0066836
0.71 0.00623905
0.72 0.00581058
0.73 0.00539808
0.74 0.00500141
0.75 0.00462047
0.76 0.00425516
0.77 0.00390537
0.78 0.003571
0.79 0.00325198
0.8 0.00294821
0.81 0.00265963
0.82 0.00238616
0.83 0.00212774
0.84 0.0018843
0.85 0.0016558
0.86 0.00144219
0.87 0.00124343
0.88 0.00105946
0.89 0.000890277
0.9 0.000735835
0.91 0.000596117
0.92 0.000471104
0.93 0.000360785
0.94 0.000265153
0.95 0.000184204
0.96 0.000117942
0.97 0.0000663757
0.98 0.0000295167
0.99 0.00000738372
1. 0
};

\addlegendentry{theoretical}

\addplot[solid, mark=o, mark options={scale=.8}, red]
plot [error bars/.cd, y dir = both, y explicit]
table[y error index=2]{
0.0030 0.1185 0.0083
0.1030 0.0783 0.0061
0.2030 0.0567 0.0050
0.3030 0.0417 0.0043
0.4030 0.0298 0.0034
0.5030 0.0206 0.0030
0.6030 0.0134 0.0024
0.7030 0.0082 0.0019
0.8030 0.0044 0.0014
0.9030 0.0022 0.0009
1.0030 0.0015 0.0006
};
\addlegendentry{$n=1000$}

\addplot[solid, mark=o, mark options={scale=.8}, orange]
plot [error bars/.cd, y dir = both, y explicit]
table[y error index=2]{
0.0010 0.1223 0.0114
0.1010 0.0816 0.0094
0.2010 0.0593 0.0072
0.3010 0.0433 0.0061
0.4010 0.0318 0.0053
0.5010 0.0225 0.0043
0.6010 0.0151 0.0035
0.7010 0.0097 0.0028
0.8010 0.0063 0.0022
0.9010 0.0038 0.0016
1.0010 0.0033 0.0015
};
\addlegendentry{$n=500$}

\addplot[solid, mark=o, mark options={scale=.8}, green]
plot [error bars/.cd, y dir = both, y explicit]
table[y error index=2]{
-0.0010 0.1314 0.0185
0.0990 0.0925 0.0165
0.1990 0.0673 0.0138
0.2990 0.0501 0.0113
0.3990 0.0375 0.0093
0.4990 0.0279 0.0083
0.5990 0.0203 0.0066
0.6990 0.0149 0.0056
0.7990 0.0111 0.0045
0.8990 0.0090 0.0037
0.9990 0.0080 0.0034
};
\addlegendentry{$n=200$}

\addplot[solid, mark=o, mark options={scale=.8}, cyan]
plot [error bars/.cd, y dir = both, y explicit]
table[y error index=2]{
-0.0030 0.1434 0.0262
0.0970 0.1069 0.0241
0.1970 0.0824 0.0228
0.2970 0.0622 0.0193
0.3970 0.0488 0.0166
0.4970 0.0385 0.0142
0.5970 0.0296 0.0121
0.6970 0.0244 0.0107
0.7970 0.0202 0.0095
0.8970 0.0173 0.0083
0.9970 0.0170 0.0082
};
\addlegendentry{$n=100$}

\end{axis}
\end{tikzpicture}
        \captionof*{figure}{\small (b) {\bf Non-Uniform-1}}
    \end{minipage}
    \vfill
    \begin{minipage}[b]{0.48\textwidth}
        \centering
        \begin{tikzpicture}[scale=.5]
\sffamily
\begin{axis}[
title style={align=center,yshift=-.0in},
legend style={font=\large,
 nodes={scale=1, transform shape},
 at={(.7,.9)},
 anchor=north west,
 draw=none},
legend cell align={left},
width = 4.5in, height = 3in,
ylabel near ticks,
ytick={0, .02, .04, .06, .08, .10, .12, .14},
ylabel = {Hellinger-variety},
xlabel near ticks,
every tick label/.append style={font=\normalsize},
xmin=-0.05,xmax=1.05,ymin=0,ymax=.14,
xtick={0, .1, .2, .3, .4, .5, .6, .7, .8, .9, 1},
xlabel={ratio of uninformative participants},
xlabel style = {font=\Large},
ylabel style = {font=\Large},
yticklabel style={
  /pgf/number format/fixed,
  /pgf/number format/precision=5
},
]

\addplot[solid, mark=., mark options={scale=.8}, gray, style=thick,]
table[]{
0 0.0248359
0.01 0.0236073
0.02 0.0226782
0.03 0.0218615
0.04 0.0211171
0.05 0.0204258
0.06 0.0197764
0.07 0.0191618
0.08 0.0185766
0.09 0.018017
0.1 0.0174801
0.11 0.0169635
0.12 0.0164654
0.13 0.0159841
0.14 0.0155184
0.15 0.015067
0.16 0.014629
0.17 0.0142036
0.18 0.0137901
0.19 0.0133877
0.2 0.0129959
0.21 0.0126142
0.22 0.0122422
0.23 0.0118793
0.24 0.0115252
0.25 0.0111796
0.26 0.0108421
0.27 0.0105125
0.28 0.0101904
0.29 0.00987571
0.3 0.0095681
0.31 0.00926739
0.32 0.00897338
0.33 0.00868588
0.34 0.00840472
0.35 0.00812974
0.36 0.0078608
0.37 0.00759775
0.38 0.00734046
0.39 0.00708881
0.4 0.00684268
0.41 0.00660197
0.42 0.00636656
0.43 0.00613637
0.44 0.0059113
0.45 0.00569126
0.46 0.00547618
0.47 0.00526597
0.48 0.00506056
0.49 0.00485987
0.5 0.00466386
0.51 0.00447244
0.52 0.00428557
0.53 0.00410318
0.54 0.00392523
0.55 0.00375165
0.56 0.00358241
0.57 0.00341745
0.58 0.00325674
0.59 0.00310023
0.6 0.00294788
0.61 0.00279965
0.62 0.00265551
0.63 0.00251543
0.64 0.00237937
0.65 0.0022473
0.66 0.0021192
0.67 0.00199503
0.68 0.00187477
0.69 0.0017584
0.7 0.0016459
0.71 0.00153724
0.72 0.0014324
0.73 0.00133136
0.74 0.00123411
0.75 0.00114062
0.76 0.00105089
0.77 0.000964901
0.78 0.000882634
0.79 0.000804079
0.8 0.000729226
0.81 0.000658064
0.82 0.000590584
0.83 0.000526776
0.84 0.000466633
0.85 0.00041015
0.86 0.000357319
0.87 0.000308136
0.88 0.000262597
0.89 0.000220699
0.9 0.000182439
0.91 0.000147816
0.92 0.000116829
0.93 0.000089478
0.94 0.0000657639
0.95 0.0000456883
0.96 0.0000292537
0.97 0.0000164632
0.98 0.00000732076
0.99 0.0000018312
1. 0
};

\addlegendentry{theoretical}

\addplot[solid, mark=o, mark options={scale=.8}, red]
plot [error bars/.cd, y dir = both, y explicit]
table[y error index=2]{
0.0030 0.0255 0.0041
0.1030 0.0189 0.0032
0.2030 0.0145 0.0028
0.3030 0.0111 0.0024
0.4030 0.0083 0.0020
0.5030 0.0062 0.0017
0.6030 0.0044 0.0014
0.7030 0.0031 0.0011
0.8030 0.0022 0.0010
0.9030 0.0017 0.0007
1.0030 0.0015 0.0007
};
\addlegendentry{$n=1000$}

\addplot[solid, mark=o, mark options={scale=.8}, orange]
plot [error bars/.cd, y dir = both, y explicit]
table[y error index=2]{
0.0010 0.0276 0.0056
0.1010 0.0208 0.0050
0.2010 0.0163 0.0042
0.3010 0.0130 0.0035
0.4010 0.0101 0.0031
0.5010 0.0081 0.0027
0.6010 0.0064 0.0023
0.7010 0.0048 0.0019
0.8010 0.0040 0.0017
0.9010 0.0034 0.0015
1.0010 0.0032 0.0015
};
\addlegendentry{$n=500$}

\addplot[solid, mark=o, mark options={scale=.8}, green]
plot [error bars/.cd, y dir = both, y explicit]
table[y error index=2]{
-0.0010 0.0333 0.0092
0.0990 0.0267 0.0085
0.1990 0.0219 0.0079
0.2990 0.0181 0.0069
0.3990 0.0149 0.0058
0.4990 0.0130 0.0054
0.5990 0.0115 0.0047
0.6990 0.0095 0.0040
0.7990 0.0089 0.0038
0.8990 0.0085 0.0037
0.9990 0.0082 0.0034
};
\addlegendentry{$n=200$}

\addplot[solid, mark=o, mark options={scale=.8}, cyan]
plot [error bars/.cd, y dir = both, y explicit]
table[y error index=2]{
-0.0030 0.0416 0.0143
0.0970 0.0359 0.0135
0.1970 0.0318 0.0127
0.2970 0.0279 0.0116
0.3970 0.0245 0.0109
0.4970 0.0221 0.0104
0.5970 0.0202 0.0092
0.6970 0.0186 0.0085
0.7970 0.0177 0.0081
0.8970 0.0167 0.0077
0.9970 0.0165 0.0077
};
\addlegendentry{$n=100$}

\end{axis}
\end{tikzpicture}
        \captionof*{figure}{\small (c) {\bf Uniform-2}}
    \end{minipage}
    \hfill
    \begin{minipage}[b]{0.48\textwidth}
        \centering
        \begin{tikzpicture}[scale=.5]
\sffamily
\begin{axis}[
title style={align=center,yshift=-.0in},
legend style={font=\large,
 nodes={scale=1, transform shape},
 at={(.7,.9)},
 anchor=north west,
 draw=none},
legend cell align={left},
width = 4.5in, height = 3in,
ylabel near ticks,
ytick={0, .02, .04, .06, .08, .10, .12, .14},
ylabel = {Hellinger-variety},
xlabel near ticks,
every tick label/.append style={font=\normalsize},
xmin=-0.05,xmax=1.05,ymin=0,ymax=.14,
xtick={0, .1, .2, .3, .4, .5, .6, .7, .8, .9, 1},
xlabel={ratio of uninformative participants},
xlabel style = {font=\Large},
ylabel style = {font=\Large},
yticklabel style={
  /pgf/number format/fixed,
  /pgf/number format/precision=5
},
]

\addplot[solid, mark=., mark options={scale=.8}, gray, style=thick,]
table[]{
0 0.045614
0.01 0.0434874
0.02 0.0418435
0.03 0.0403861
0.04 0.0390502
0.05 0.0378045
0.06 0.0366309
0.07 0.035517
0.08 0.0344544
0.09 0.0334365
0.1 0.0324585
0.11 0.0315162
0.12 0.0306064
0.13 0.0297265
0.14 0.0288741
0.15 0.0280474
0.16 0.0272447
0.17 0.0264644
0.18 0.0257053
0.19 0.0249663
0.2 0.0242464
0.21 0.0235446
0.22 0.02286
0.23 0.0221921
0.24 0.0215401
0.25 0.0209033
0.26 0.0202813
0.27 0.0196734
0.28 0.0190793
0.29 0.0184985
0.3 0.0179305
0.31 0.017375
0.32 0.0168316
0.33 0.0163001
0.34 0.01578
0.35 0.0152712
0.36 0.0147733
0.37 0.0142861
0.38 0.0138094
0.39 0.0133428
0.4 0.0128863
0.41 0.0124397
0.42 0.0120026
0.43 0.0115751
0.44 0.0111568
0.45 0.0107476
0.46 0.0103475
0.47 0.00995616
0.48 0.00957356
0.49 0.00919956
0.5 0.00883402
0.51 0.00847685
0.52 0.00812792
0.53 0.00778715
0.54 0.00745443
0.55 0.00712967
0.56 0.00681279
0.57 0.0065037
0.58 0.00620234
0.59 0.00590863
0.6 0.00562249
0.61 0.00534387
0.62 0.0050727
0.63 0.00480893
0.64 0.0045525
0.65 0.00430337
0.66 0.00406148
0.67 0.0038268
0.68 0.00359928
0.69 0.00337888
0.7 0.00316557
0.71 0.00295933
0.72 0.00276011
0.73 0.00256789
0.74 0.00238266
0.75 0.00220438
0.76 0.00203304
0.77 0.00186863
0.78 0.00171113
0.79 0.00156053
0.8 0.00141683
0.81 0.00128001
0.82 0.00115007
0.83 0.00102702
0.84 0.000910859
0.85 0.000801583
0.86 0.000699205
0.87 0.000603732
0.88 0.000515176
0.89 0.000433553
0.9 0.000358879
0.91 0.000291174
0.92 0.000230459
0.93 0.000176761
0.94 0.000130106
0.95 0.0000905246
0.96 0.0000580509
0.97 0.0000327207
0.98 0.0000145734
0.99 0.00000365135
1. 0
};

\addlegendentry{theoretical}

\addplot[solid, mark=o, mark options={scale=.8}, red]
plot [error bars/.cd, y dir = both, y explicit]
table[y error index=2]{
0.0030 0.0473 0.0053
0.1030 0.0344 0.0042
0.2030 0.0261 0.0037
0.3030 0.0195 0.0030
0.4030 0.0147 0.0026
0.5030 0.0105 0.0022
0.6030 0.0072 0.0018
0.7030 0.0047 0.0015
0.8030 0.0029 0.0011
0.9030 0.0018 0.0008
1.0030 0.0015 0.0007
};
\addlegendentry{$n=1000$}

\addplot[solid, mark=o, mark options={scale=.8}, orange]
plot [error bars/.cd, y dir = both, y explicit]
table[y error index=2]{
0.0010 0.0497 0.0074
0.1010 0.0367 0.0066
0.2010 0.0276 0.0054
0.3010 0.0216 0.0046
0.4010 0.0167 0.0042
0.5010 0.0124 0.0033
0.6010 0.0089 0.0028
0.7010 0.0065 0.0023
0.8010 0.0045 0.0018
0.9010 0.0036 0.0015
1.0010 0.0031 0.0014
};
\addlegendentry{$n=500$}

\addplot[solid, mark=o, mark options={scale=.8}, green]
plot [error bars/.cd, y dir = both, y explicit]
table[y error index=2]{
-0.0010 0.0560 0.0124
0.0990 0.0437 0.0113
0.1990 0.0346 0.0101
0.2990 0.0274 0.0086
0.3990 0.0222 0.0075
0.4990 0.0180 0.0066
0.5990 0.0140 0.0055
0.6990 0.0116 0.0048
0.7990 0.0098 0.0043
0.8990 0.0087 0.0037
0.9990 0.0082 0.0036
};
\addlegendentry{$n=200$}

\addplot[solid, mark=o, mark options={scale=.8}, cyan]
plot [error bars/.cd, y dir = both, y explicit]
table[y error index=2]{
-0.0030 0.0654 0.0180
0.0970 0.0547 0.0176
0.1970 0.0458 0.0161
0.2970 0.0380 0.0142
0.3970 0.0323 0.0134
0.4970 0.0279 0.0118
0.5970 0.0237 0.0108
0.6970 0.0206 0.0095
0.7970 0.0180 0.0088
0.8970 0.0173 0.0081
0.9970 0.0174 0.0082
};
\addlegendentry{$n=100$}

\end{axis}
\end{tikzpicture}
        \captionof*{figure}{\small (d) {\bf Non-Uniform-2}}
    \end{minipage}
    
    \captionof{figure}{\small {\bf Hellinger-variety {\sl v.s.} \bf ratio of non-experts:} We adopt the same setting as Figure~\ref{fig:numfig} and observe the similar results for Hellinger-variety.}
    \label{fig:numfig-hellinger}
\end{figure}

\clearpage
\section{Robustness-check for group size}
In this section, for each comparison between two groups of respondents, to reduce the effect of group size, we sample the same amount of respondents without replacement from the larger size group and then compare the two equal-size groups by Tvd-variety and baseline. For example, if group A has 300 respondents and group B has 200 respondents, then we will sample 200 respondents from group A without replacement and compare group B with the subset of group A. We can compute the error bar, i.e., the standard deviation by repeating the sampling process. The following figures show the results. Due to our sampling process, only one-side has an error bar. The results still show that compared to baseline, Tvd-variety is more consistent with the reference. 

\begin{figure}[!h]
\definecolor{col1}{HTML}{ff919e}
\definecolor{col2}{HTML}{00aea5}
\centering
\begin{tikzpicture}[font=\small]
\sffamily
\begin{axis}[
  ybar,
  bar width=10pt,
  ylabel={Tvd-variety},
  ylabel near ticks,
  width = 5in, height = 1.5in,
  ymin=0,
  ytick=\empty,
  xtick=data,
  tick label style={font=\scriptsize},
  axis x line=bottom,
  axis y line=left,
  enlarge x limits=0.1,
  symbolic x coords={Basketball(M),Soccer(M),Basketball(F),Snooker,Formula One,Volleyball(F),Ping-pong(M)},
  xticklabel style={anchor=base,yshift=-\baselineskip},
  nodes near coords={\pgfmathprintnumber\pgfplotspointmeta},
  every node near coord/.append style={yshift=4pt},
  legend image code/.code={
        \draw [#1] (0cm,-0.1cm) rectangle (0.2cm,0.1cm); },
  legend style={
    font=\small,
    nodes={scale=1, transform shape},
    at={(0.5,-0.25)},
    anchor=north,
    draw=none,
    legend columns=-1},
]
  \addplot[font=\tiny,color=col1,bar shift=-6pt,style={fill=col1,draw=none},error bars/.cd, y dir=both, y explicit,error bar style=gray] coordinates {
(Basketball(M),33.7) +- (0,0.0)
(Soccer(M),29.5) +- (0,0.0)
(Basketball(F),29.1) +- (0,0.0)
(Snooker,26.7) +- (0,0.0)
(Formula One,19.8) +- (0,0.0)
(Volleyball(F),30.0) +- (0,0.0)
(Ping-pong(M),22.1) +- (0,0.0)
  };
  \addlegendentry{Often watch sports}
  \addplot[font=\tiny,color=col2,bar shift=6pt,style={fill=col2,draw=none},error bars/.cd, y dir=both, y explicit,error bar style=gray] coordinates {
(Basketball(M),24.7) +- (0,2.0)
(Soccer(M),24.1) +- (0,2.2)
(Basketball(F),25.9) +- (0,2.0)
(Snooker,25.5) +- (0,1.9)
(Formula One,20.0) +- (0,2.1)
(Volleyball(F),26.9) +- (0,2.0)
(Ping-pong(M),24.9) +- (0,2.1)
  };

  \addlegendentry{Not often watch sports}
\end{axis}
\end{tikzpicture}
\begin{tikzpicture}[font=\small]
\sffamily
\begin{axis}[
  ybar,
  bar width=10pt,
  ylabel={{\textcolor{white}{|}Baseline\textcolor{white}{|}}},
  ylabel near ticks,
  width = 5in, height = 1.5in,
  ymin=0,
  ytick=\empty,
  xtick=data,
  tick label style={font=\scriptsize},
  axis x line=bottom,
  axis y line=left,
  enlarge x limits=0.1,
  symbolic x coords={Basketball(M),Soccer(M),Basketball(F),Snooker,Formula One,Volleyball(F),Ping-pong(M)},
  xticklabel style={anchor=base,yshift=-\baselineskip},
  nodes near coords={\pgfmathprintnumber\pgfplotspointmeta},
  every node near coord/.append style={yshift=4pt},
  legend image code/.code={
        \draw [#1] (0cm,-0.1cm) rectangle (0.2cm,0.1cm); },
  legend style={
    font=\small,
    nodes={scale=1, transform shape},
    at={(0.5,-0.25)},
    anchor=north,
    draw=none,
    legend columns=-1},
]
  \addplot[font=\tiny,color=col1,bar shift=-6pt,style={fill=col1,draw=none},error bars/.cd, y dir=both, y explicit,error bar style=gray] coordinates {
(Basketball(M),18.4) +- (0,0.0)
(Soccer(M),6.3) +- (0,0.0)
(Basketball(F),10.9) +- (0,0.0)
(Snooker,2.6) +- (0,0.0)
(Formula One,14.7) +- (0,0.0)
(Volleyball(F),21.6) +- (0,0.0)
(Ping-pong(M),5.3) +- (0,0.0)
  };
  
  \addplot[font=\tiny,color=col2,bar shift=6pt,style={fill=col2,draw=none},error bars/.cd, y dir=both, y explicit,error bar style=gray] coordinates {
(Basketball(M),2.2) +- (0,1.6)
(Soccer(M),5.9) +- (0,2.5)
(Basketball(F),13.8) +- (0,2.3)
(Snooker,6.7) +- (0,2.2)
(Formula One,7.1) +- (0,2.4)
(Volleyball(F),12.2) +- (0,2.3)
(Ping-pong(M),6.3) +- (0,2.4)
  };
  
\end{axis}
\end{tikzpicture}
\captionof{figure}{\small {\bf Often watching sports  {\sl v.s.} not}}
\label{ath:watch-sample}
\end{figure}
\begin{figure}[!h]
\definecolor{col1}{HTML}{ff919e}
\definecolor{col2}{HTML}{00aea5}
\centering
\begin{tikzpicture}[font=\small]
\sffamily
\begin{axis}[
  ybar,
  bar width=10pt,
  ylabel={Tvd-variety},
  ylabel near ticks,
  width = 5in, height = 1.5in,
  ymin=0,
  ytick=\empty,
  xtick=data,
  tick label style={font=\scriptsize},
  axis x line=bottom,
  axis y line=left,
  enlarge x limits=0.1,
  symbolic x coords={Native1,Native2,Native3,Native4},
  xticklabel style={anchor=base,yshift=-\baselineskip},
  nodes near coords={\pgfmathprintnumber\pgfplotspointmeta},
  every node near coord/.append style={yshift=4pt},
  legend image code/.code={
        \draw [#1] (0cm,-0.1cm) rectangle (0.2cm,0.1cm); },
  legend style={
    font=\small,
    nodes={scale=1, transform shape},
    at={(0.5,-0.25)},
    anchor=north,
    draw=none,
    legend columns=-1},
]
  \addplot[font=\tiny,color=col1,bar shift=-6pt,style={fill=col1,draw=none},error bars/.cd, y dir=both, y explicit,error bar style=gray] coordinates {
(Native1,38.8) +- (0,2.2)
(Native2,30.9) +- (0,2.5)
(Native3,33.7) +- (0,2.6)
(Native4,33.9) +- (0,2.5)
  };
  \addlegendentry{\footnotesize Familiar with native stand-up comedy}
  \addplot[font=\tiny,color=col2,bar shift=6pt,style={fill=col2,draw=none},error bars/.cd, y dir=both, y explicit,error bar style=gray] coordinates {
(Native1,35.7) +- (0,0.0)
(Native2,26.4) +- (0,0.0)
(Native3,30.0) +- (0,0.0)
(Native4,30.0) +- (0,0.0)
  };

  \addlegendentry{\footnotesize Unfamiliar with native stand-up comedy}
\end{axis}
\end{tikzpicture}
\begin{tikzpicture}[font=\small]
\sffamily
\begin{axis}[
  ybar,
  bar width=10pt,
  ylabel={{\textcolor{white}{|}Baseline\textcolor{white}{|}}},
  ylabel near ticks,
  width = 5in, height = 1.5in,
  ymin=0,
  ytick=\empty,
  xtick=data,
  tick label style={font=\scriptsize},
  axis x line=bottom,
  axis y line=left,
  enlarge x limits=0.1,
  symbolic x coords={Native1,Native2,Native3,Native4},
  xticklabel style={anchor=base,yshift=-\baselineskip},
  nodes near coords={\pgfmathprintnumber\pgfplotspointmeta},
  every node near coord/.append style={yshift=4pt},
  legend image code/.code={
        \draw [#1] (0cm,-0.1cm) rectangle (0.2cm,0.1cm); },
  legend style={
    font=\small,
    nodes={scale=1, transform shape},
    at={(0.5,-0.25)},
    anchor=north,
    draw=none,
    legend columns=-1},
]
  \addplot[font=\tiny,color=col1,bar shift=-6pt,style={fill=col1,draw=none},error bars/.cd, y dir=both, y explicit,error bar style=gray] coordinates {
(Native1,12.9) +- (0,3.5)
(Native2,15.2) +- (0,3.4)
(Native3,10.5) +- (0,3.4)
(Native4,16.9) +- (0,3.5)
  };
  
  \addplot[font=\tiny,color=col2,bar shift=6pt,style={fill=col2,draw=none},error bars/.cd, y dir=both, y explicit,error bar style=gray] coordinates {
(Native1,15.0) +- (0,0.0)
(Native2,17.1) +- (0,0.0)
(Native3,12.9) +- (0,0.0)
(Native4,15.7) +- (0,0.0)
  };
  
\end{axis}
\end{tikzpicture}
\captionof{figure}{\small {\bf Familiar with native stand-up comedy {\sl v.s.} unfamiliar}}
\label{ath:sex_sample1}
\end{figure}
\begin{figure}[!h]
\definecolor{col1}{HTML}{ff919e}
\definecolor{col2}{HTML}{00aea5}
\centering
\begin{tikzpicture}[font=\small]
\sffamily
\begin{axis}[
  ybar,
  bar width=10pt,
  ylabel={Tvd-variety},
  ylabel near ticks,
  width = 5in, height = 1.5in,
  ymin=0,
  ytick=\empty,
  xtick=data,
  tick label style={font=\scriptsize},
  axis x line=bottom,
  axis y line=left,
  enlarge x limits=0.1,
  symbolic x coords={Foreign1,Foreign2,Foreign3,Foreign4},
  xticklabel style={anchor=base,yshift=-\baselineskip},
  nodes near coords={\pgfmathprintnumber\pgfplotspointmeta},
  every node near coord/.append style={yshift=4pt},
  legend image code/.code={
        \draw [#1] (0cm,-0.1cm) rectangle (0.2cm,0.1cm); },
  legend style={
    font=\small,
    nodes={scale=1, transform shape},
    at={(0.5,-0.25)},
    anchor=north,
    draw=none,
    legend columns=-1},
]
  \addplot[font=\tiny,color=col1,bar shift=-6pt,style={fill=col1,draw=none},error bars/.cd, y dir=both, y explicit,error bar style=gray] coordinates {
(Foreign1,31.6) +- (0,0.0)
(Foreign2,22.6) +- (0,0.0)
(Foreign3,30.0) +- (0,0.0)
(Foreign4,31.1) +- (0,0.0)
  };
  \addlegendentry{\footnotesize Familiar with foreign stand-up comedy}
  \addplot[font=\tiny,color=col2,bar shift=6pt,style={fill=col2,draw=none},error bars/.cd, y dir=both, y explicit,error bar style=gray] coordinates {
(Foreign1,28.1) +- (0,2.2)
(Foreign2,23.9) +- (0,2.4)
(Foreign3,28.5) +- (0,2.3)
(Foreign4,27.3) +- (0,2.3)
  };

  \addlegendentry{\footnotesize Unfamiliar with foreign stand-up comedy}
\end{axis}
\end{tikzpicture}
\begin{tikzpicture}[font=\small]
\sffamily
\begin{axis}[
  ybar,
  bar width=10pt,
  ylabel={{\textcolor{white}{|}Baseline\textcolor{white}{|}}},
  ylabel near ticks,
  width = 5in, height = 1.5in,
  ymin=0,
  ytick=\empty,
  xtick=data,
  tick label style={font=\scriptsize},
  axis x line=bottom,
  axis y line=left,
  enlarge x limits=0.1,
  symbolic x coords={Foreign1,Foreign2,Foreign3,Foreign4},
  xticklabel style={anchor=base,yshift=-\baselineskip},
  nodes near coords={\pgfmathprintnumber\pgfplotspointmeta},
  every node near coord/.append style={yshift=4pt},
  legend image code/.code={
        \draw [#1] (0cm,-0.1cm) rectangle (0.2cm,0.1cm); },
  legend style={
    font=\small,
    nodes={scale=1, transform shape},
    at={(0.5,-0.25)},
    anchor=north,
    draw=none,
    legend columns=-1},
]
  \addplot[font=\tiny,color=col1,bar shift=-6pt,style={fill=col1,draw=none},error bars/.cd, y dir=both, y explicit,error bar style=gray] coordinates {
(Foreign1,15.8) +- (0,0.0)
(Foreign2,2.6) +- (0,0.0)
(Foreign3,16.8) +- (0,0.0)
(Foreign4,3.7) +- (0,0.0)
  };
  
  \addplot[font=\tiny,color=col2,bar shift=6pt,style={fill=col2,draw=none},error bars/.cd, y dir=both, y explicit,error bar style=gray] coordinates {
(Foreign1,7.6) +- (0,2.9)
(Foreign2,3.8) +- (0,2.4)
(Foreign3,14.1) +- (0,2.7)
(Foreign4,2.5) +- (0,1.9)
  };
  
\end{axis}
\end{tikzpicture}
\captionof{figure}{\small {\bf Familiar with foreign stand-up comedy {\sl v.s.} unfamiliar}}
\label{ath:sex_sample2}
\end{figure}

\clearpage
\section{Contents of surveys}
In this section, we list the questions we used in case studies. For each respondent, the orders of options are randomly shuffled.
\subsection{Survey for athletes}
\begin{enumerate}[itemsep=1pt]
    \item What is your gender?
    
    \begin{enumerate*}[itemjoin=\qquad]
        \item Female
        \item Male
    \end{enumerate*}
    
    \item Do you often watch sports?
    
    \begin{enumerate*}[itemjoin=\qquad]
        \item I often watch sports
        \item I do not often watch sports
    \end{enumerate*}
    
    \item Which of the following two basketball players do you prefer?
    
    \begin{enumerate*}[itemjoin=\qquad]
        \item Zhenlin Zhang
        \item Songwei Zhu
    \end{enumerate*}
    
    \item What percentage of people do you think prefer Zhenlin Zhang?
    
    \begin{enumerate*}[itemjoin=\ ]
        \item 0\item 10\item 20\item 30\item 40\item 50
        \item 60\item 70\item 80\item 90\item 100
    \end{enumerate*}
    
    \item Which of the following two soccer players do you prefer?
    
    \begin{enumerate*}[itemjoin=\qquad]
        \item Andrés Iniesta
        \item Luka Modrić
    \end{enumerate*}
    
    \item What percentage of people do you think prefer Andrés Iniesta?
    
    \begin{enumerate*}[itemjoin=\ ]
        \item 0\item 10\item 20\item 30\item 40\item 50
        \item 60\item 70\item 80\item 90\item 100
    \end{enumerate*}
    
    \item There are 8 red balls and 12 blue balls with the same shape in the box. One is randomly selected. What percentage do you think is the probability of a blue ball?
    
    \begin{enumerate*}[itemjoin=\ ]
        \item 0\item 10\item 20\item 30\item 40\item 50
        \item 60\item 70\item 80\item 90\item 100
    \end{enumerate*}
    
    \item Which of the following two basketball players do you prefer?
    
    \begin{enumerate*}[itemjoin=\qquad]
        \item Nan Chen
        \item Lijie Miao
    \end{enumerate*}
    
    \item What percentage of people do you think prefer Nan Chen?
    
    \begin{enumerate*}[itemjoin=\ ]
        \item 0\item 10\item 20\item 30\item 40\item 50
        \item 60\item 70\item 80\item 90\item 100
    \end{enumerate*}
    
    \item Which of the following two snooker players do you prefer?
    
    \begin{enumerate*}[itemjoin=\qquad]
        \item Judd Trump
        \item John Higgins
    \end{enumerate*}
    
    \item What percentage of people do you think prefer Judd Trump?
    
    \begin{enumerate*}[itemjoin=\ ]
        \item 0\item 10\item 20\item 30\item 40\item 50
        \item 60\item 70\item 80\item 90\item 100
    \end{enumerate*}
    
    \item Which of the following two Formula One players do you prefer?
    
    \begin{enumerate*}[itemjoin=\qquad]
        \item Sebastian Vettel
        \item Lewis Hamilton
    \end{enumerate*}
    
    \item What percentage of people do you think prefer Sebastian Vettel?
    
    \begin{enumerate*}[itemjoin=\ ]
        \item 0\item 10\item 20\item 30\item 40\item 50
        \item 60\item 70\item 80\item 90\item 100
    \end{enumerate*}
    
    \item Which of the following two volleyball players do you prefer?
    
    \begin{enumerate*}[itemjoin=\qquad]
        \item Ruirui Zhao
        \item Yimei Wang
    \end{enumerate*}
    
    \item What percentage of people do you think prefer Ruirui Zhao?
    
    \begin{enumerate*}[itemjoin=\ ]
        \item 0\item 10\item 20\item 30\item 40\item 50
        \item 60\item 70\item 80\item 90\item 100
    \end{enumerate*}
    
    \item Which of the following two ping-pong players do you prefer?
    
    \begin{enumerate*}[itemjoin=\qquad]
        \item Jingkun Liang
        \item Chuqin Wang
    \end{enumerate*}
    
    \item What percentage of people do you think prefer Jingkun Liang?
    
    \begin{enumerate*}[itemjoin=\ ]
        \item 0\item 10\item 20\item 30\item 40\item 50
        \item 60\item 70\item 80\item 90\item 100
    \end{enumerate*}
\end{enumerate}
\subsection{Survey for stand-up comedians}
\begin{enumerate}[itemsep=1pt]
    \item What is your gender?
    
    \begin{enumerate*}[itemjoin=\qquad]
        \item Female
        \item Male
    \end{enumerate*}
    
    \item How often do you watch native stand-up comedies?
    
    \begin{enumerate*}[itemjoin=\qquad]
        \item often
        \item sometimes
        \item occasionally
        \item almost never
    \end{enumerate*}
    
    \item How often do you watch foreign stand-up comedies?
    
    \begin{enumerate*}[itemjoin=\qquad]
        \item often
        \item sometimes
        \item occasionally
        \item almost never
    \end{enumerate*}
    
    \item Which of the following two stand-up comedians do you prefer?
    
    \begin{enumerate*}[itemjoin=\qquad]
        \item Lan Hu
        \item Jianguo Wang
    \end{enumerate*}
    
    \item What percentage of people do you think prefer Lan Hu?
    
    \begin{enumerate*}[itemjoin=\ ]
        \item 0\item 10\item 20\item 30\item 40\item 50
        \item 60\item 70\item 80\item 90\item 100
    \end{enumerate*}
    
    \item Which of the following two stand-up comedians do you prefer?
    
    \begin{enumerate*}[itemjoin=\qquad]
        \item Whitney Cummings
        \item Ali Wong
    \end{enumerate*}
    
    \item What percentage of people do you think prefer Whitney Cummings?
    
    \begin{enumerate*}[itemjoin=\ ]
        \item 0\item 10\item 20\item 30\item 40\item 50
        \item 60\item 70\item 80\item 90\item 100
    \end{enumerate*}
    
    \item Which of the following two stand-up comedians do you prefer?
    
    \begin{enumerate*}[itemjoin=\qquad]
        \item Guangzhi He
        \item Mengen Yang
    \end{enumerate*}
    
    \item What percentage of people do you think prefer Guangzhi He?
    
    \begin{enumerate*}[itemjoin=\ ]
        \item 0\item 10\item 20\item 30\item 40\item 50
        \item 60\item 70\item 80\item 90\item 100
    \end{enumerate*}
    
    \item Which of the following two stand-up comedians do you prefer?
    
    \begin{enumerate*}[itemjoin=\qquad]
        \item Russell Peters
        \item Michael McIntyre
    \end{enumerate*}
    
    \item What percentage of people do you think prefer Russell Peters?
    
    \begin{enumerate*}[itemjoin=\ ]
        \item 0\item 10\item 20\item 30\item 40\item 50
        \item 60\item 70\item 80\item 90\item 100
    \end{enumerate*}
    
    \item Which of the following two stand-up comedians do you prefer?
    
    \begin{enumerate*}[itemjoin=\qquad]
        \item Li Yang
        \item Siwen Wang
    \end{enumerate*}
    
    \item What percentage of people do you think prefer Li Yang?
    
    \begin{enumerate*}[itemjoin=\ ]
        \item 0\item 10\item 20\item 30\item 40\item 50
        \item 60\item 70\item 80\item 90\item 100
    \end{enumerate*}
    
    \item Which of the following two stand-up comedians do you prefer?
    
    \begin{enumerate*}[itemjoin=\qquad]
        \item KT Tatara
        \item Yumi Nagashima
    \end{enumerate*}
    
    \item What percentage of people do you think prefer KT Tatara?
    
    \begin{enumerate*}[itemjoin=\ ]
        \item 0\item 10\item 20\item 30\item 40\item 50
        \item 60\item 70\item 80\item 90\item 100
    \end{enumerate*}
    
    \item There are 1 red balls and 4 blue balls with the same shape in the box. One is randomly selected. What percentage do you think is the probability of a red ball?
    
    \begin{enumerate*}[itemjoin=\ ]
        \item 0\item 10\item 20\item 30\item 40\item 50
        \item 60\item 70\item 80\item 90\item 100
    \end{enumerate*}
    
    \item Which of the following two stand-up comedians do you prefer?
    
    \begin{enumerate*}[itemjoin=\qquad]
        \item Bo Pang
        \item Qimo Zhou
    \end{enumerate*}
    
    \item What percentage of people do you think prefer Bo Pang?
    
    \begin{enumerate*}[itemjoin=\ ]
        \item 0\item 10\item 20\item 30\item 40\item 50
        \item 60\item 70\item 80\item 90\item 100
    \end{enumerate*}
    
    \item Which of the following two stand-up comedians do you prefer?
    
    \begin{enumerate*}[itemjoin=\qquad]
        \item Ronny Chieng
        \item Jimmy OYang
    \end{enumerate*}
    
    \item What percentage of people do you think prefer Ronny Chieng?
    
    \begin{enumerate*}[itemjoin=\ ]
        \item 0\item 10\item 20\item 30\item 40\item 50
        \item 60\item 70\item 80\item 90\item 100
    \end{enumerate*}
    
\end{enumerate}

\end{document}